\documentclass{article}
\usepackage{microtype}
\usepackage{graphicx}
\usepackage{subcaption}
\usepackage{booktabs} 

\usepackage{hyperref}

\usepackage[preprint]{icml2026}
\usepackage{amsmath}
\usepackage{amssymb}
\usepackage{mathtools}
\usepackage{amsthm}
\usepackage[capitalize,noabbrev]{cleveref}

\usepackage{xcolor}
\usepackage{algorithm}
\usepackage{algorithmic}  

\usepackage{amsmath}
\usepackage{color}
\usepackage{float}
\usepackage[most]{tcolorbox}
\usepackage{makecell}  
\usepackage[table]{xcolor}
\usepackage{wrapfig}
\usepackage{booktabs,array,xcolor,tabularx}
\newcolumntype{Y}{>{\raggedright\arraybackslash}X} %
\usepackage{multirow}

\definecolor{yamlA}{HTML}{9B1D20}  
\definecolor{yamlB}{HTML}{C94F3B}  
\definecolor{yamlC}{HTML}{E75B56}  
\definecolor{yamlD}{HTML}{F472B6}  

\definecolor{agentA}{HTML}{4F46E5}  
\definecolor{agentB}{HTML}{3B82F6}  
\definecolor{agentC}{HTML}{9333EA}  
\definecolor{agentD}{HTML}{10B981}  
\definecolor{agentE}{HTML}{F59E0B}  
\definecolor{agentF}{HTML}{D97706}  
\theoremstyle{definition}
\newtheorem{definition}{Definition}
\newtheorem{theorem}{Theorem}

\newtcbox{\kwtag}[1][yamlC]{%
  on line,
  tcbox raise base,   
  boxrule=0.6pt,
  arc=1.1pt,
  colframe=#1,
  colback=#1!5,
  boxsep=1pt,
  left=0.8pt, right=0.8pt, top=1pt, bottom=0.6pt
}
\newtcbox{\kwtagi}[1][yamlC]{%
  on line,
  boxrule=0.6pt,
  arc=1.1pt,
  colframe=#1,
  colback=#1!5,
  boxsep=1pt,
  left=0.8pt, right=0.8pt, top=1pt, bottom=0.6pt
}
\definecolor{deepred}{RGB}{180,0,0}

\newcommand{\kwtext}[2]{\texttt{\textcolor{#1}{#2}}}

\newcommand{\NOYAMLFOUND}{\kwtag[yamlA]{\kwtext{yamlA}{[NO\_YAML\_FOUND]}}}
\newcommand{\YAMLPARSEERROR}{\kwtag[yamlB]{\kwtext{yamlB}{[YAML\_PARSE\_ERROR]}}}
\newcommand{\YAMLSCHEMAINVALID}{\kwtag[yamlC]{\kwtext{yamlC}{[YAML\_SCHEMA\_INVALID]}}}
\newcommand{\YAMLLOGICINVALID}{\kwtag[yamlD]{\kwtext{yamlD}{[YAML\_LOGIC\_INVALID]}}}

\definecolor{errWA}{HTML}{60A5FA}  
\definecolor{errTLE}{HTML}{3B82F6}  
\definecolor{errMLE}{HTML}{2563EB}  
\definecolor{errRE}{HTML}{1D4ED8}  
\definecolor{errCE}{HTML}{1E40AF}  
\definecolor{passColor}{HTML}{10B981}  
\newcommand{\errbox}[2]{\kwtagi[#1]{\texttt{\textcolor{#1}{#2}}}}

\newcommand{\WRONGANSWER}{\errbox{errWA}{[WRONG\_ANSWER]}}
\newcommand{\TIMELIMITEXCEEDED}{\errbox{errTLE}{[TIME\_LIMIT\_EXCEEDED]}}
\newcommand{\MEMORYLIMITEXCEEDED}{\errbox{errMLE}{[MEMORY\_LIMIT\_EXCEEDED]}}
\newcommand{\RUNTIMEERROR}{\errbox{errRE}{[RUNTIME\_ERROR]}}
\newcommand{\COMPILATIONERROR}{\errbox{errCE}{[COMPILATION\_ERROR]}}
\newcommand{\PASSED}{\errbox{passColor}{PASSED}}

\newcommand{\TagSmall}[1]{\raisebox{-0.15ex}{\scalebox{0.8}{#1}}}

\let\NOYAMLFOUNDorig\NOYAMLFOUND
\renewcommand{\NOYAMLFOUND}{\TagSmall{\NOYAMLFOUNDorig}}

\let\YAMLPARSEERRORorig\YAMLPARSEERROR
\renewcommand{\YAMLPARSEERROR}{\TagSmall{\YAMLPARSEERRORorig}}

\let\YAMLSCHEMAINVALIDorig\YAMLSCHEMAINVALID
\renewcommand{\YAMLSCHEMAINVALID}{\TagSmall{\YAMLSCHEMAINVALIDorig}}

\let\YAMLLOGICINVALIDorig\YAMLLOGICINVALID
\renewcommand{\YAMLLOGICINVALID}{\TagSmall{\YAMLLOGICINVALIDorig}}

\let\WRONGANSWERorig\WRONGANSWER
\renewcommand{\WRONGANSWER}{\TagSmall{\WRONGANSWERorig}}

\let\TIMELIMITEXCEEDEDorig\TIMELIMITEXCEEDED
\renewcommand{\TIMELIMITEXCEEDED}{\TagSmall{\TIMELIMITEXCEEDEDorig}}

\let\MEMORYLIMITEXCEEDEDorig\MEMORYLIMITEXCEEDED
\renewcommand{\MEMORYLIMITEXCEEDED}{\TagSmall{\MEMORYLIMITEXCEEDEDorig}}

\let\RUNTIMEERRORorig\RUNTIMEERROR
\renewcommand{\RUNTIMEERROR}{\TagSmall{\RUNTIMEERRORorig}}

\let\COMPILATIONERRORorig\COMPILATIONERROR
\renewcommand{\COMPILATIONERROR}{\TagSmall{\COMPILATIONERRORorig}}

\usepackage[textsize=tiny]{todonotes}

\icmltitlerunning{AgentConductor: Topology Evolution for Multi-Agent Competition-Level Code Generation}

\begin{document}

\twocolumn[
  \icmltitle{AgentConductor: Topology Evolution for Multi-Agent Competition-Level Code Generation}



  \icmlsetsymbol{corr}{$\dagger$}
  \begin{icmlauthorlist}
    \icmlauthor{Siyu Wang}{sjtu}
    \icmlauthor{Ruotian Lu}{sjtu}
    \icmlauthor{Zhihao Yang}{sjtu}
    \icmlauthor{Yuchao Wang}{sjtu}
    \icmlauthor{Yanzhou Zhang}{sjtu}
    \icmlauthor{Lei Xu}{sjtu}
    \icmlauthor{Qimin Xu}{sjtu}
    \icmlauthor{Guojun Yin}{meituan}
    \icmlauthor{Cailian Chen}{sjtu,corr}
    \icmlauthor{Xinping Guan}{sjtu,corr}
\end{icmlauthorlist}

  \icmlaffiliation{sjtu}{Shanghai Jiao Tong University}
  \icmlaffiliation{meituan}{Meituan}

  \icmlcorrespondingauthor{Cailian Chen}{cailianchen@sjtu.edu.cn}
  \icmlcorrespondingauthor{Xinping Guan}{xpguan@sjtu.edu.cn}
  \icmlkeywords{Machine Learning, ICML}
  \vskip 0.3in
]



\printAffiliationsAndNotice{\textsuperscript{$\dagger$}Corresponding Author}

\newcommand{\yuchao}[1]{\textcolor[rgb]{0.8,0.20,0.70}{\textit{#1}}}
\begin{abstract}
  Large language model(LLM)-driven multi-agent systems(MAS) coordinate specialized agents through predefined interaction topologies and have shown promise for complex tasks such as competition-level code generation. Recent studies demonstrate that carefully designed multi-agent workflows and communication graphs can significantly improve code generation performance by leveraging collaborative reasoning. However, existing methods neither adapt topology density to task difficulty nor iteratively refine the topology within an instance using execution feedback, which leads to redundant communication and performance bottlenecks. To address these issues, we propose AgentConductor: a reinforcement learning-optimized MAS with an LLM-based orchestrator agent as its core, which enables end-to-end feedback-driven dynamic generation of interaction topologies. For each query, AgentConductor infers agent roles and task difficulty, then constructs a task-adapted, density-aware layered directed acyclic graph (DAG) topology, underpinned by two key innovations. First, we design a novel topological density function that captures communication-aware mathematical characterizations of multi-agent interactions. Second, we adopt difficulty interval partitioning to avoid excessive pruning for precise topological density upper bound measurement per difficulty level and finer-grained control. Empirically, across three competition-level and two foundational code datasets, AgentConductor achieves state-of-the-art accuracy, outperforming the strongest baseline by up to 14.6\% in pass@1 accuracy, 13\% in density reduction, and 68\% in token cost reduction.
\end{abstract}

\section{Introduction}
\begin{figure}[H]
\centering
\includegraphics[width=\linewidth]{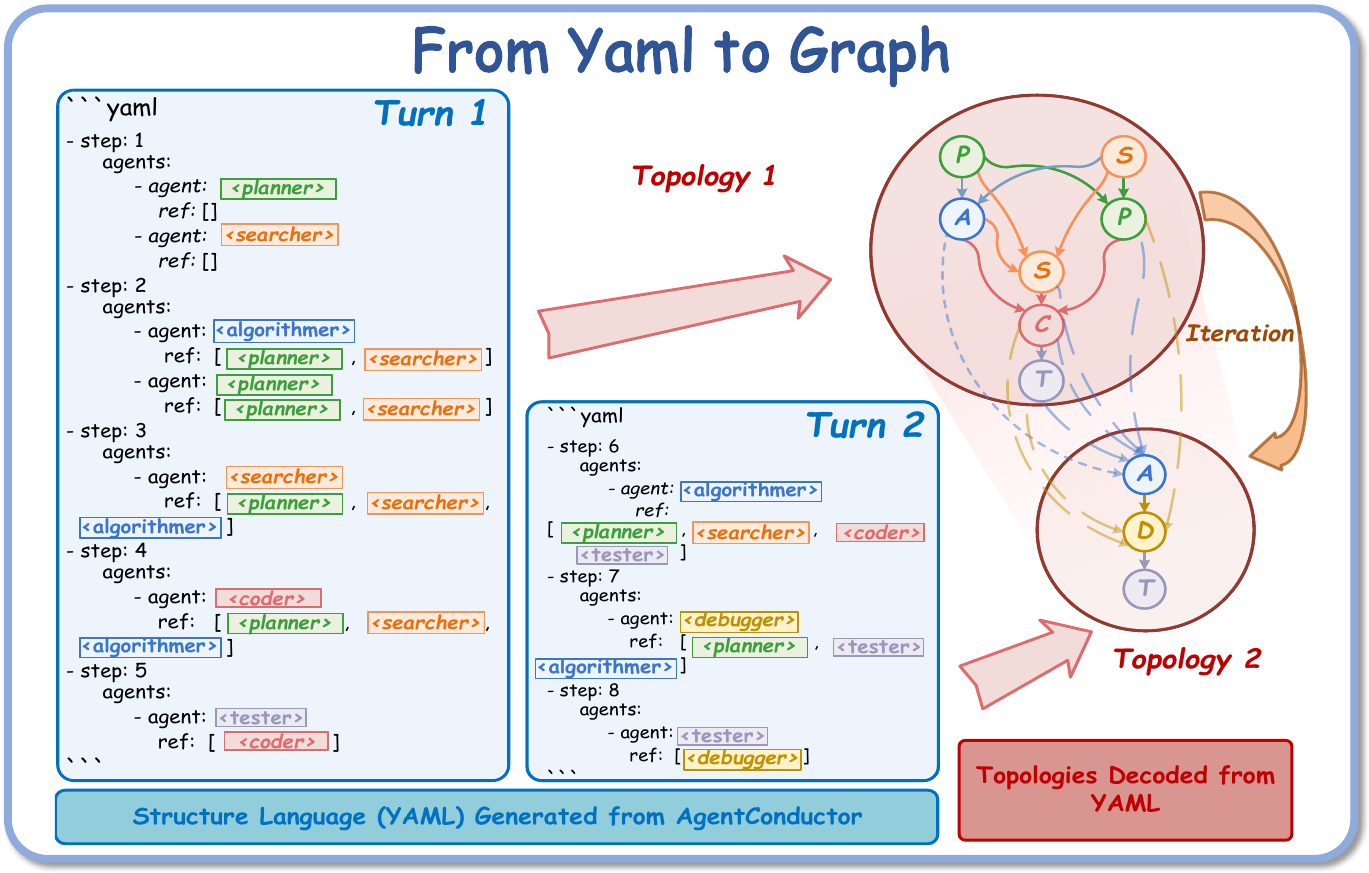}
\small
\caption{YAML representation of the topology, its mapping to the actual graph, and the two-turn graph evolution.}
\label{img:start_img}
\end{figure}
\vspace{-2em}

\begin{figure*}[]
\vspace{-1mm}
\centering
\includegraphics[width=\linewidth]{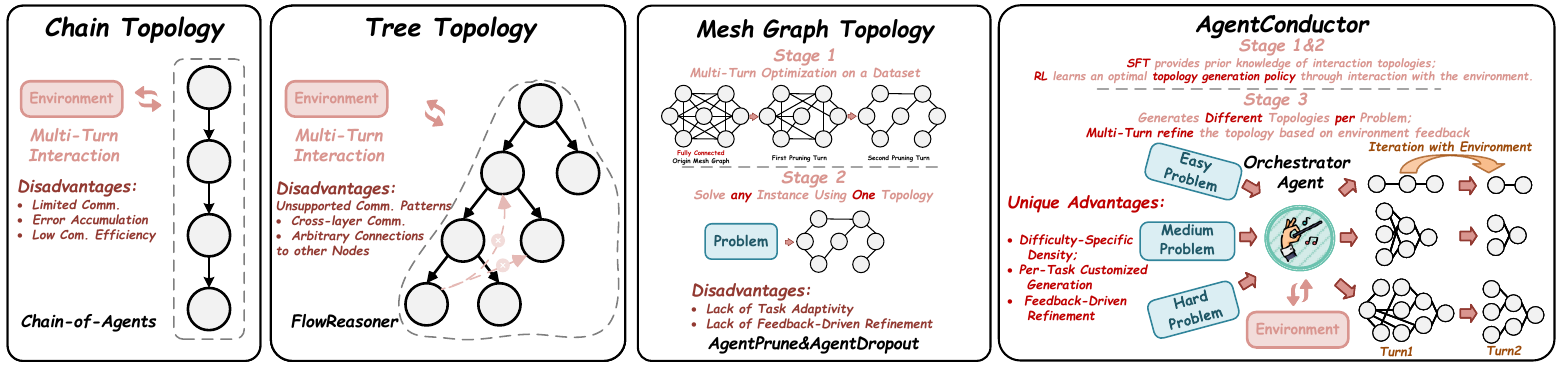}
\vspace{-6mm}
\caption{\textbf{Comparison of \textit{Topology Structures and Optimization Paradigms} Between Our Method and Classic Baselines}}
\vspace{-2em}
\label{img:dif_topo}

\end{figure*}

Competition-level programming is widely regarded as one of the most demanding problem-solving tasks \citep{khan2023xcodeeval, hendrycks2021measuring}. These problems cover a range of difficulty levels, including instances that approach the upper bound of competitive difficulty. Solving more challenging cases demands a deep understanding of problem statements, sophisticated reasoning capabilities, and robust algorithmic expertise. 
Recently, LLM-based MAS have achieved notable progress on competition-level code generation \citep{islam2024mapcoder, islam2025codesim}. Their performance gains largely stem from carefully designed interaction topologies that facilitate efficient coordination and enhance solution accuracy. However, these systems typically rely on fixed topologies. This design can elevate the performance ceiling for challenging instances but introduces redundant interactions and unnecessary computational overhead when handling easier ones. To mitigate the cost escalation caused by complex coordination, some methods \citep{zhang2024cut, wang2025agentdropout} optimize and prune a topology for a class of problems, while others generate query-conditioned interaction structures \citep{zhang2024g}. Although these approaches reduce cost through limited dynamic topology optimization, they do not adjust topology density to match task difficulty. They also cannot iteratively refine the topology within a single problem using execution feedback. These capabilities are critical for competition-level code generation.
This motivates a central question: \textbf{\textit{How can we automatically generate task-specific interaction topologies that scale density with difficulty and evolve in response to execution feedback?}}

Existing approaches attempt to dynamically optimize interaction topologies through pruning or generation. Graph pruning methods \citep{zhang2024cut, zhuge2024gptswarm} reduce cost by iteratively removing edges or roles. However, once derived, the pruned topology is typically reused across different problem instances within the same dataset or benchmark. As a result, these methods may not match task-specific requirements and can lead to degraded performance. Graph generation approaches \citep{zhang2024g} improve upon pruning by conditioning topology construction on the input query. Compared with pruning, they enable instance level topology generation. However, the generated topology remains frozen within each problem and does not adapt to execution feedback. Moreover, both categories of methods typically rely on monotonic sparsity constraints that encourage convergence toward a fixed density range. Unlike the aforementioned approaches, Workflow-centric RL methods \citep{gao2025flowreasoner, li2025chain} train a single agent to manage linearized multi-stage workflows via end-to-end RL, supporting multi-turn optimization based on environmental feedback. Nevertheless, they constrain interactions to chain- or tree-structured communication patterns and lack the expressiveness and flexibility of general interaction graphs. This limitation not only results in accumulated errors but also hinders the performance improvements that richer topologies could offer(See Figure\ref{img:dif_topo}).

Compared with the typical multi-agent interaction topologies in MacNet \citep{qian2024scaling}, our approach introduces a topology(See Figure\ref{img:start_img}) that supports both cross-layer communication and within-layer parallelism. In contrast, chain topologies limit parallel interactions, layered topologies restrict communication primarily to adjacent layers, and tree topologies offer only local branching without flexible connections to earlier nodes. Our design aims to capture these interaction benefits without adopting the fully connected structure and complexity of mesh topologies. In addition, the topology is represented in a structured language using YAML. This representation is human readable and can be directly generated by LLM based agents.

Building on this foundation, we present \textbf{AgentConductor, a RL optimized MAS centered on an LLM orchestrator agent that performs multi-turn, end-to-end dynamic generation of the above interaction topologies for competition-level code generation.} We first apply supervised fine-tuning(SFT) to equip the orchestrator with priors over interaction graphs. To better capture the characteristics of multi-agent interaction, we further propose a graph density evaluation function tailored to our proposed layered DAG. It provides a principled characterization of multi-agent interaction. This design enables the objective to minimize interaction cost while maximizing performance under a given sparsity constraint. We further provide a rigorous mathematical proof of this property. Finally, to optimize the orchestrator with RL, we design a multi-objective reward based on this metric that balances structural correctness, code accuracy, and density. A distinctive feature of our density reward is the introduction of difficulty-dependent bounds on topology density. This fine-grained control enables explicit cost–accuracy trade-offs under token budgets. In summary, our main contributions are as follows:
\begin{itemize} 

\item We propose a \textbf{novel layered DAG topology for multi-agent interaction} that supports intra-layer parallelism and cross-layer interactions. The topology is represented in a human-readable format that can be directly generated by agents.

\item We introduce \textbf{AgentConductor, an RL-optimized MAS centered on an LLM orchestrator agent}, which enables end-to-end difficulty-aware evolutionary dynamic interaction topology generation in competition-level code generation.

\item We introduce a \textbf{graph density evaluation function for layered DAGs and use it to design a multi-objective reward function} balancing structural correctness, code accuracy, and difficulty-aware density under task-specific constraints.

\item We demonstrate state-of-the-art performance on multiple competition-level and foundational code benchmarks, \textbf{achieving higher accuracy with lower average density and reduced cost compared to existing methods.}
\end{itemize}
\begin{figure*}[]
\vspace{-1mm}
\centering
\includegraphics[width=\linewidth]{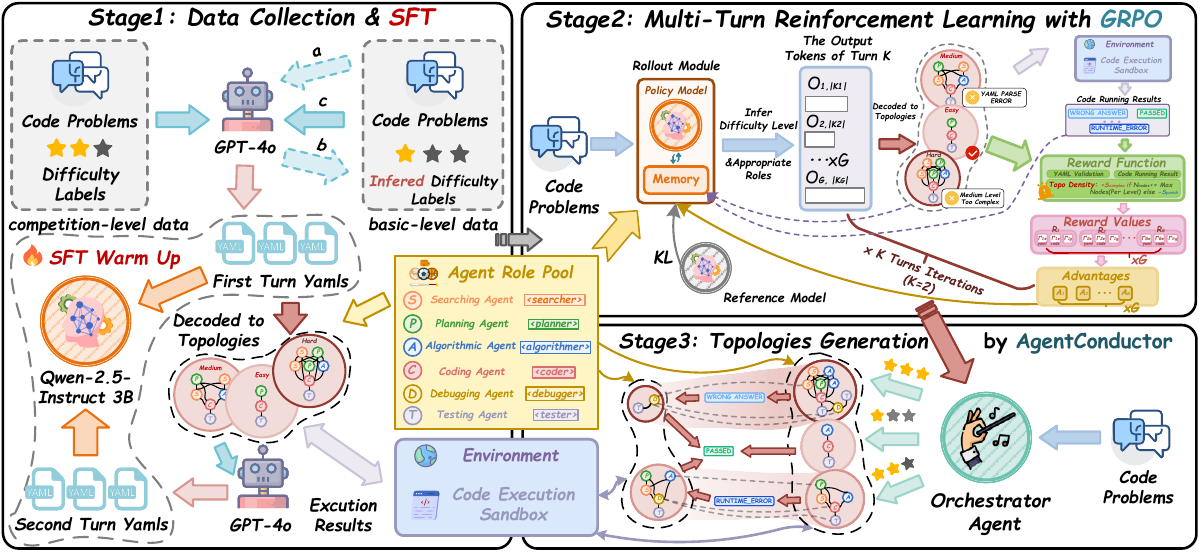}
\vspace{-6mm}
\caption{\textbf{\textit{Overall framework of the proposed AgentConductor}}. The approach proceeds in three stages: (1) SFT on diverse topologies to instill structural priors in the base LLM (Qwen-2.5-Instruct-3B); (2) RL with GRPO to learn task-adaptive, difficulty-aware topology policies from execution feedback, yielding the orchestrator agent; and (3) multi-turn dynamic topology generation for end-to-end code problem solving. \vspace{1em}}
\label{img:method_main}
\vspace{-2.4em}
\end{figure*}
\section{AgentConductor}
\label{section:method}
AgentConductor is an RL-optimized MAS centered on an orchestrator agent. As shown in Fig.~\ref{img:method_main}, we train the orchestrator agent through Stage 1 SFT and Stage 2 RL. Stage 1 equips the orchestrator with rich prior knowledge of interaction topologies. Stage 2 further optimizes the orchestrator using trajectory-based RL that incorporates multi-turn environment feedback. Through this process, the orchestrator learns to generate more suitable topologies and to update them in response to execution feedback. During Stage 3 inference, our orchestrator is frozen and can be transferred to new datasets without additional optimization. In Appendix~\ref{appendix:zero_shot_transfer} and Appendix~\ref{appendix:cross_domain}, we evaluate zero-shot transfer and transfer to new task types after adding roles with minimal additional training, respectively, providing evidence of the generality of our approach. 

We maintain a predefined pool of agent roles. Given a programming problem, the orchestrator first estimates the task difficulty and selects the agent roles that are most suitable for participation. It then generates an interaction topology that matches the inferred difficulty level. The MAS subsequently executes according to this topology. After interacting with the code execution environment, the system receives execution feedback. If the attempt fails, the feedback and interaction history are used as inputs to the next turn, and the orchestrator regenerates a more suitable topology to better solve the same problem. In this section, we present a detailed description of the overall framework and its components, as illustrated in Fig.\ref{img:method_main}.

\subsection{Problem Definition}

\subsubsection{Interaction Topology Notations}
We first introduce a novel multi-agent interaction topology expressed in a human-readable structured language (YAML). As shown in Fig.\ref{img:start_img} , this topology is structurally defined as an improved layered DAG, where \textbf{\textit{step}} denotes a layer and \textbf{\textit{ref}} denotes an edge, supporting both intra-layer parallelism and cross-layer connections. Furthermore, it supports multi-turn evolutionary generation driven by execution feedback from multi-agent interactions. Formally, it is denoted as $\mathcal{G}^{(k)} = (\mathcal{V}^{(k)}, \mathcal{E}^{(k)})$, where $k$ is the turn index. 
Each node $v_{i}^{(k)} \in \mathcal{V}^{(k)}$ represents an agent instance that executes during turn $k$. 
The entire topology is generated and orchestrated by the orchestrator agent. See Appendix \ref{appendix:notions} for detailed notions of the interaction topology.

\subsubsection{AgentConductor Paradigm}
Given a code problem $x$, the orchestrator agent policy $\pi_\theta$ generates, at turn $k\in\{1,\dots,K\}$, a variable-length YAML token sequence \begin{equation}\label{eq:topo_seq}
o_k=(o_{k,1},\ldots,o_{k,|o_k|}), 
\end{equation} that encodes the interaction topology. The sequence is deterministically decoded into a layered DAG \begin{equation}
\mathcal{G}^{(k)} = \mathrm{DecodeTopo}(o_k),
\label{eq:decode-topo}
\end{equation} 
\textbf{\textit{In particular, AgentConductor calibrates the topology density to the inferred difficulty of $x$. This induces variable $o_k$ lengths $|o_k|$ and reduces superfluous reasoning and token usage.}}
The environment then executes agents according to $\mathcal{G}^{(k)}$ and returns feedback $z_k$ which can be further decomposed as $z_k=(z_k^{\text{roles}}, z_k^{\text{code}})$, where $z_k^{\text{roles}}$ collects the outputs of multiple agents generated, and $z_k^{\text{code}}$ denotes the sandboxed code-execution outcome. Let the turn history be $H_k=\{(\mathcal{G}^{(h)},z_h)\}_{h<k}$. The joint process factorizes as

\begin{multline}\label{eq:main_func}
p_\theta(o_{1:K},z_{1:K}\mid x)
=\prod_{k=1}^{K}
\underbrace{\pi_\theta\!\left(o_k \mid x,H_k\right)}_{\text{Topology generation}}
\\ \cdot
\underbrace{P_{\text{env}}\!\left(z_k \mid x,\mathcal{G}^{(k)},H_k\right)}_{\text{Execution feedback}} .
\end{multline}

Equation~\ref{eq:main_func} factorizes the multi-turn process into topology generation with environment execution: at turn $k$ the policy emits $o_k$ conditioned on $(x,H_k)$, the environment executes under $\mathcal{G}^{(k)}$ and returns $z_k$. Feedback $z_k$ is appended to $H_{k+1}$ and conditions the next generation, so the topology is updated online in response to execution feedback. See Appendix \ref{appendix:alg} for algorithmic details.
\subsubsection{ Graph Density Evaluation Function}
\label{graph:g_c}

To better assess the complexity and performance of multi-agent interactions and explicitly account for cost consumption, we define the graph complexity evaluation function described by three metrics, including the number of nodes, the edge density and graph depth. The first two metrics can reflect the token costs, while the last indicator reflects the degree of parallelism of the system, or in other words, the response time. Let $n_i$ denote the number of agent invocations in step $i$, $s$ be the total steps for each round, then the 
total number of nodes is
\begin{equation}\label{eq:num_of_nodes}
|\mathcal{V}| = \sum_{i=1}^{s} n_i.
\end{equation}
Edges are formed through agent references, with the total number of edges given by
\begin{equation}\label{eq:num_of_edges}
|E| = \sum_{i=1}^{s} \sum_{j=1}^{n_i} |Agent_j[\text{ref}]|,
\end{equation} and the depth of the graph is related to the depth of invocation of the agent, denoted by $d$. Inspired by \textbf{\textit{Theorem~\ref{thm-1}}}, we use the number of DAG layers (the total steps $s$) instead. For normalization, we map each metric into the unit interval $[0,1]$. 
The normalized scores are defined as: 
\begin{equation}
\begin{aligned}
S_{\text{node}}
&= \exp\!\left(-\tfrac{|V|}{N_{\max}(l)}\right), \\
S_{\text{edge}}
&= \exp\!\left(-\tfrac{|E|}{|V|(|V|-0.5)}\right), \\
S_{\text{depth}}
&= 1 - \tfrac{s}{|V|}.
\end{aligned}
\end{equation}
where \( l \) is task difficulty level, each level is associated with a maximum allowed number of nodes $N_{\max}(l)$. $S_{\text{node}}$ reflects the node complexity based on the graph size. $S_{\text{edge}}$ captures the edge complexity relative to a complete graph, and $S_{\text{depth}}$ quantifies the spread of the graph by comparing its depth to the total number of nodes. The overall graph complexity evaluation function is defined as:
\begin{equation}
\label{eq:graph_complexity}
\scalebox{0.95}{$
\mathcal{S}_{\text{complex}}
=  \exp\!\left(
 S_{\text{node}}
+ 2 \cdot S_{\text{edge}}
+ S_{\text{depth}}
\right)
$}
\end{equation}
$\mathcal{S}_{\text{complex}}$ serves as a component of the reward function $r_{\phi}(\cdot)$, as defined in Eq.\ref{graph_score}, and contributes to the trajectory reward $\hat{A}_i$ in the Group Relative Policy Optimization (GRPO) advantage function, as detailed in Eq.\ref{eq:grpo_adv}.
The mathematical derivation that precisely defines $\mathcal{S}_{\text{complex}}$ as the topology density is provided in Appendix \ref{appendix:density}.

\subsection{SFT data Generation}
\label{graph:sft}
To endow the base LLM with topology priors and facilitate its optimization during reinforcement learning, we built a supervised corpus. From three competition-level datasets and three difficulty tiers, we sampled 50 problems per tier per dataset (450 total). We designed a customized system prompt and queried GPT-4o to produce one YAML topology per problem.
Each topology was validated by our checker for format correctness, de-duplication, and density within the difficulty band(See Appendix \ref{appendix: sft_quality} for details). For each topology, we constructed error-aware prompts from distinct failure types and generated a second-turn iterative topology. Combined with first-turn runs, this yielded 2,700 competition-level interaction graphs. We repeated the pipeline on two basic datasets to obtain 300 initial examples across difficulties; here the model inferred difficulty and generated the topology accordingly. In total we collected 4,500 examples. This produces a base model endowed with strong priors for topology generation.
\subsection{Reinforcing Dynamic Topologies for LLM-MA via Trajectory-Level Policy Optimization}
\paragraph{GRPO-Based Training for Dynamic Topology Generation}
After SFT, we further train the orchestrator policy to generate dynamic multi-agent interaction topologies using GRPO. See Appendix \ref{appendix:Multi-Turn Trajectories} for the multi-turn trajectory and return definition.
Specifically, the advantage of trajectory $i$ is defined as
\begin{equation}\label{eq:grpo_adv}
\hat A_i=\frac{R_i(\tau)-\mathrm{mean}\!\left(\{R_j(\tau)\}_{j=1}^{G}\right)}
               {\mathrm{std}\!\left(\{R_j(\tau)\}_{j=1}^{G}\right)},
\end{equation}
Here, $R_i$ can be viewed as the instance-level realization of $R(\tau)$ (defined in Eq.~\ref{eq:return}) within the group of $G$ sampled trajectories.

The GRPO objective function can be formally expressed in Appendix ~\ref{appendix:rl_funs}.
\begin{table*}[!t]
\centering
\caption{Rewards for Topology Validation and Code Execution Errors}
\label{tab:table_for_reward_score}

\setlength\tabcolsep{4pt}
\renewcommand{\arraystretch}{1.8}
\fontsize{9pt}{10pt}\selectfont

\resizebox{\textwidth}{!}{%
  \begin{tabular}{ccccc|ccccc}
    \toprule
    \multicolumn{5}{c}{\textbf{YAML Topology Correctness Rewards}} & 
    \multicolumn{5}{c}{\textbf{Code Execution Error Rewards}} \\
    \cmidrule(lr){1-5} \cmidrule(lr){6-10}
    \multicolumn{2}{c}{\textbf{Error Type}} & 
    \multicolumn{2}{c}{\textbf{Explanation}} & 
    \textbf{Reward} &  
    \multicolumn{2}{c}{\textbf{Error Type}} & 
    \multicolumn{2}{c}{\textbf{Explanation}} & 
    \textbf{Reward} \\
    \midrule
    \multicolumn{2}{c}{\raisebox{0.34ex}{\NOYAMLFOUND}} & 
    \multicolumn{2}{c}{No YAML block found.} & -2.0 & 
    \multicolumn{2}{c}{\raisebox{0.34ex}{\WRONGANSWER}} & 
    \multicolumn{2}{c}{\makecell{Code executes but outputs \\ mismatch with expected}} & 1.0 \\
    \multicolumn{2}{c}{\raisebox{0.34ex}{\YAMLPARSEERROR}} & 
    \multicolumn{2}{c}{YAML parse failed.} & -1.5 & 
    \multicolumn{2}{c}{\raisebox{0.34ex}{\TIMELIMITEXCEEDED}} & 
    \multicolumn{2}{c}{Execution exceeded time limit.} & 0.9 \\
    \multicolumn{2}{c}{\raisebox{0.34ex}{\YAMLSCHEMAINVALID}} & 
    \multicolumn{2}{c}{\makecell{YAML parsed, \\ but fails the topology schema.}} & -1.0 & 
    \multicolumn{2}{c}{\raisebox{0.34ex}{\MEMORYLIMITEXCEEDED}} & 
    \multicolumn{2}{c}{Execution exceeded memory limit.} & 0.8 \\
    \multicolumn{2}{c}{\raisebox{0.34ex}{\YAMLLOGICINVALID}} & 
    \multicolumn{2}{c}{Violates topology logic rules.} & -0.5 & 
    \multicolumn{2}{c}{\raisebox{0.34ex}{\RUNTIMEERROR}} & 
    \multicolumn{2}{c}{Program crashed during execution.} & 0.7 \\
    \multicolumn{2}{c}{\rule[0.5ex]{4em}{0.4pt}} & 
    \multicolumn{2}{c}{\rule[0.5ex]{4em}{0.4pt}} & 
    - & 
    \multicolumn{2}{c}{\raisebox{0.34ex}{\COMPILATIONERROR}} & 
    \multicolumn{2}{c}{Program failed to compile.} & 0.6 \\
    \bottomrule
  \end{tabular}
}
\end{table*}

\paragraph{Design of a Rule-Based Multi-Objective Reward Function}
The reward function directly influences the optimization process in RL. 
In this subsection, we elaborate on the definition of the immediate per-turn reward function 
\(r_\phi(\cdot)\) introduced in Eq.~\ref{eq:reward}.

To provide a single training signal that balances correctness, topology quality, and efficiency, 
we instantiate the immediate reward function in Eq.~\ref{eq:reward} as a weighted composite:
\begin{equation}\label{eq:scalar_reward}
r_\phi(\mathcal{G}^{(k)}, z_k^{\text{code}}) \;=\; 
 r_e(\mathcal{G}^{(k)}, z_k^{\text{code}})
+  r_g(\mathcal{G}^{(k)})
\end{equation}
where the non-negative weights $w_i$ reflect the relative importance of each component. 
Here, $r_e$ (execution correctness) is derived from $z_k^{\text{code}}$ and $\mathcal{G}^{(k)}$, providing a reward for both the YAML validation and the code execution results; 
$r_g$ (graph density) evaluates the interaction topology $\mathcal{G}^{(k)}$, serving as the topology density reward function. 
This instantiation makes explicit that $r_\phi(\cdot)$ in Eq.~\ref{eq:reward} 
is realized as a weighted sum of multiple objectives, yielding a scalar reward signal for trajectory-level optimization.

\paragraph{Execution Result Reward} We first validate the format after the commander generates YAML. If no YAML is found or YAML does not match the rule, the system raises an error, and gives a punishment according to the type of error. The types of error are shown as: 
\begin{equation}
\label{eq:yaml-error-types}
\scalebox{0.75}{$
\begin{aligned}
\mathcal{E}_{\text{yaml\_errors}}=\{&
\NOYAMLFOUND,\ \YAMLPARSEERROR,\\
&\YAMLSCHEMAINVALID,\ \YAMLLOGICINVALID
\}
\end{aligned}
$}
\end{equation}

Then the testing agent gives the evaluation results of the generated code. Unless the result of test case matches the expected answer, the system raises a fail information based on the code run results. The error types for the code execution are defined and summarized as follows: 
\begin{equation}
\label{eq:error-types}
\scalebox{0.7}{$
\begin{aligned}[t]
\mathcal{E}_{\text{code\_errors}}
&= \{\WRONGANSWER,\ \TIMELIMITEXCEEDED,\\
&\qquad \MEMORYLIMITEXCEEDED,\ \RUNTIMEERROR,\ \\ \COMPILATIONERROR\}
\end{aligned}
$}
\end{equation}

The specific reward values for topology validation and code execution errors are provided in Table \ref{tab:table_for_reward_score}. Additionally, the reward for \PASSED \, is 1.5, while no reward value is applied for successful YAML validation.
\paragraph{Interaction Graph Complexity Reward Function}  
To classify the interaction graph complexity according to difficulty levels, we define the function \( \mathcal{S}_{\text{complex}} \) for the interaction topology graph density in Eq.~\ref{eq:graph_complexity}. Given the task difficulty level \( l \), each level is associated with a maximum allowed number of nodes $N_{\max}(l)$. For each turn $k$, the per-turn upper bound under the three difficulty levels is set to 4, 7, and 10, respectively. These values are obtained through statistical analysis of thousands of SFT-generated samples, examining the distribution of topology densities required for successful solutions.
\begin{equation}
N_{\max}^{(k)}(l)=
\begin{cases}
4,  & l=1\ \text{(easy)},\\
7,  & l=2\ \text{(medium)},\\
10, & l=3\ \text{(hard)},
\end{cases}
\qquad k\in\{1,2\}.
\label{eq:n_max}
\end{equation} If $|V|$ (the number of nodes, as defined in Eq.~\ref{eq:num_of_nodes}) exceeds this bound, the graph is considered overly complex and penalized accordingly. Finally, the overall interaction graph evaluation score is defined as
\begin{align}
r_g(\mathcal{G}^{(k)}) = 
\begin{cases}
 \mathcal{S}_{\text{complex}}, 
& |V| \leq N_{\max}(l), \\[6pt]
\tanh\!\left(\frac{N_{\max}(l) - |V|}{N_{\max}(l)}\right), & \text{otherwise}.
\end{cases}
\label{graph_score}
\end{align}

\renewcommand\tabularxcolumn[1]{m{#1}}
\newcolumntype{Y}{>{\centering\arraybackslash}X}
\newcolumntype{M}[1]{>{\centering\arraybackslash}m{#1}}

\colorlet{penNOYAML}{red!98}  
\colorlet{penPARSE}{red!85}   
\colorlet{penSCHEMA}{red!65}  
\colorlet{penLOGIC}{red!40}   



\colorlet{penWA}{errWA}
\colorlet{penTLE}{errTLE}
\colorlet{penMLE}{errMLE}
\colorlet{penRE}{errRE}
\colorlet{penCE}{errCE}


\begin{table*}[!t]
\centering
\caption{Main performance of AgentConductor on three competition-level and two basic code generation datasets (mean ± std over 3 runs).}
\vspace{-0.8em}
\label{tab:main_table}
\setlength\tabcolsep{1pt}
\renewcommand{\arraystretch}{1}
\fontsize{8.1pt}{12pt}\selectfont
\resizebox{0.95\textwidth}{!}{
\begin{tabular}{p{5cm}*{8}{c}c}
\toprule
\multirow{2}[2]{*}{\textbf{Method}} &
\multicolumn{4}{c}{\textbf{Contest-level Code Generation}} &
\multicolumn{3}{c}{\textbf{Basic Code Generation}} &
\multirow{2}[2]{*}{\textbf{Avg.}} \\
\cmidrule(lr){2-5} \cmidrule(lr){6-8}
& \scriptsize APPs & \scriptsize LiveCodeBench & \scriptsize CodeContests & \scriptsize Avg. &
\scriptsize HumanEval & \scriptsize MBPP & \scriptsize Avg. & \\
\midrule
\multicolumn{9}{l}{\textit{\textbf{Vanilla}}} \\
GPT-4o-mini  
& $20.3_{\text{(±0.2)}}$
& $26.3_{\text{(±0.2)}}$
& $18.6_{\text{(±0.4)}}$
& $21.7_{\text{(±0.3)}}$
& $87.6_{\text{(±0.2)}}$
& $73.5_{\text{(±0.1)}}$
& $80.5_{\text{(±0.1)}}$
& $51.1_{\text{(±0.2)}}$\\

\midrule
\multicolumn{9}{l}{\textit{\textbf{Classical Multi-Agent Systems (No Workflow/Topology Optimization)}}} \\
AutoGen     
& $23.6_{\text{(±2.3)}}$
& $30.2_{\text{(±1.5)}}$
& $20.8_{\text{(±1.9)}}$
& $24.9_{\text{(±1.9)}}$
& $90.4_{\text{(±0.8)}}$
& $92.3_{\text{(±0.4)}}$
& $91.4_{\text{(±0.6)}}$
& $58.1_{\text{(±1.3)}}$\\

MetaGPT 
& $\underline{51.3}_{\text{(±1.4)}}$
& $42.8_{\text{(±1.3)}}$
& $35.6_{\text{(±1.2)}}$
& $\underline{43.2}_{\text{(±1.3)}}$
& $95.8_{\text{(±0.2)}}$
& $92.3_{\text{(±0.3)}}$
& $94.1_{\text{(±0.2)}}$
& $\underline{68.7}_{\text{(±0.6)}}$\\

MapCoder 
& $40.2_{\text{(±0.9)}}$
& $37.4_{\text{(±1.1)}}$
& $36.3_{\text{(±0.7)}}$
& $38.0_{\text{(±0.9)}}$
& $96.4_{\text{(±0.5)}}$
& $\underline{94.1}_{\text{(±0.4)}}$
& $95.3_{\text{(±0.5)}}$
& $66.6_{\text{(±0.7)}}$\\

\midrule
\multicolumn{9}{l}{\textit{\textbf{Multi-Agent Systems with Workflow Optimization}}} \\
AFlow       
& $35.4_{\text{(±1.7)}}$
& $24.6_{\text{(±1.1)}}$
& $21.4_{\text{(±1.5)}}$
& $27.1_{\text{(±1.4)}}$
& $94.2_{\text{(±0.3)}}$
& $82.4_{\text{(±0.1)}}$
& $88.3_{\text{(±0.2)}}$
& $57.7_{\text{(±0.8)}}$\\

FlowReasoner 
& $39.1_{\text{(±1.9)}}$
& $43.8_{\text{(±2.1)}}$
& $\underline{37.7}_{\text{(±1.6)}}$
& $40.2_{\text{(±1.9)}}$
& $\underline{97.3}_{\text{(±0.5)}}$
& $93.9_{\text{(±0.7)}}$
& $\underline{95.6}_{\text{(±0.6)}}$
& $67.5_{\text{(±1.3)}}$\\

Chain-of-Agents(32B) 
& $41.6_{\text{(±1.3)}}$
& $\underline{44.9}_{\text{(±1.2)}}$
& $34.6_{\text{(±1.2)}}$
& $40.3_{\text{(±1.2)}}$
& $95.3_{\text{(±0.2)}}$
& $90.2_{\text{(±0.3)}}$
& $92.8_{\text{(±0.2)}}$
& $67.9_{\text{(±0.6)}}$\\

\midrule
\multicolumn{9}{l}{\textit{\textbf{Multi-Agent Systems with Topology Optimization}}} \\

GPTSwarm 
& $36.5_{\text{(±2.1)}}$
& $40.8_{\text{(±2.5)}}$
& $31.6_{\text{(±3.0)}}$
& $36.3_{\text{(±2.5)}}$
& $94.8_{\text{(±1.1)}}$
& $91.6_{\text{(±1.3)}}$
& $93.2_{\text{(±1.2)}}$
& $64.8_{\text{(±1.9)}}$\\

AgentPrune(Complex) 
& $38.6_{\text{(±1.9)}}$
& $41.7_{\text{(±2.1)}}$
& $33.5_{\text{(±0.8)}}$
& $37.9_{\text{(±1.6)}}$
& $96.1_{\text{(±0.5)}}$
& $91.8_{\text{(±0.8)}}$
& $94.0_{\text{(±0.7)}}$
& $65.9_{\text{(±1.1)}}$\\

AgentPrune(Layered) 
& $39.3_{\text{(±1.6)}}$
& $41.9_{\text{(±1.8)}}$
& $31.4_{\text{(±0.9)}}$
& $37.5_{\text{(±1.4)}}$
& $96.6_{\text{(±0.7)}}$
& $92.3_{\text{(±0.3)}}$
& $94.5_{\text{(±0.5)}}$
& $66.0_{\text{(±1.0)}}$\\

MacNet(Complex) 
& $37.6_{\text{(±0.8)}}$
& $39.4_{\text{(±0.7)}}$
& $28.7_{\text{(±0.7)}}$
& $35.2_{\text{(±0.7)}}$
& $95.8_{\text{(±0.4)}}$
& $89.4_{\text{(±0.2)}}$
& $92.6_{\text{(±0.3)}}$
& $63.9_{\text{(±0.5)}}$\\

MacNet(Layered) 
& $36.9_{\text{(±0.6)}}$
& $40.3_{\text{(±0.5)}}$
& $28.9_{\text{(±0.8)}}$
& $35.4_{\text{(±0.6)}}$
& $95.2_{\text{(±0.2)}}$
& $90.3_{\text{(±0.3)}}$
& $92.8_{\text{(±0.3)}}$
& $64.1_{\text{(±0.5)}}$\\

G-Designer 
& $37.2_{\text{(±1.5)}}$
& $38.8_{\text{(±1.3)}}$
& $26.9_{\text{(±1.2)}}$
& $34.3_{\text{(±1.3)}}$
& $95.6_{\text{(±0.9)}}$
& $90.9_{\text{(±0.8)}}$
& $93.2_{\text{(±0.9)}}$
& $63.7_{\text{(±1.1)}}$\\

\rowcolor[RGB]{220,235,245}
AgentConductor(3B)  
& $\mathbf{58.8}_{\text{(±0.3)}}$
& $\mathbf{46.3}_{\text{(±0.4)}}$
& $\mathbf{38.8}_{\text{(±0.5)}}$
& $\mathbf{48.0}_{\text{(±0.3)}}$
& $\mathbf{97.5}_{\text{(±0.1)}}$
& $\mathbf{95.1}_{\text{(±0.2)}}$
& $\mathbf{96.3}_{\text{(±0.2)}}$
& $\mathbf{72.1}_{\text{(±0.3)}}$\\

\bottomrule
\end{tabular}}
\vspace{-0.8em}
\end{table*}
\section{Experiments}
\label{exps}
\subsection{Experimental Setup}

\paragraph{Datasets and Metrics} 
\label{gragph:dataset_and_metrics}To comprehensively evaluate our approach in terms of performance, topology dynamics, and cost efficiency across problems of varying difficulty and type, we select two \textbf{basic code generation datasets} and three \textbf{contest-level code generation datasets}: \textbf{(1) Basic Code Generation Datasets: } including HumanEval\citep{chen2021evaluating}, MBPP\citep{austin2021program}; \textbf{(2) Contest-Level Code Generation Datasets: } including APPS\citep{hendrycks2021measuring}, LiveCodeBench (V4)\citep{jain2024livecodebench}, and CodeContests\citep{li2022competition}.  The generated code is executed within a secure sandbox \citep{khan2023xcodeeval} environment. Model performance is then measured by the \textbf{pass@1} rate on each test set.

\paragraph{Baselines}
To provide a comprehensive comparison and highlight the effectiveness of our approach, we evaluate against four categories of baselines:  
\textbf{(1)Vanilla: } This setting reflects the capability of a single backbone model. We adopt \texttt{GPT-4o-mini} as the representative backbone.  \textbf{(2)Classical Multi-Agent Systems: } \texttt{AutoGen}\citep{wu2024autogen}, \texttt{MetaGPT}\citep{hong2024metagpt} and \texttt{MapCoder}\citep{islam2024mapcoder}.  \textbf{(3)Multi-Agent Systems with Workflow Optimization: } \texttt{AFlow}\citep{zhang2024aflow}, \texttt{FlowReasoner}\citep{gao2025flowreasoner} and \texttt{Chain-of-Agents}. \textbf{(4)Multi-Agent Systems with Topology Optimization: }\texttt{GPTSwarm}\citep{zhuge2024gptswarm}, \texttt{AgentPrune}\citep{zhang2024cut}, \texttt{G-Designer} \citep{zhang2024g}, and \texttt{MacNet}\citep{qian2024scaling}.(See Appendix \ref{appendix:baselines} for details.)

\subsection{Main Results}
In this section, we provide extensive experimental evidence to analyze the effectiveness of our proposed \textbf{AgentConductor} method. Specifically, we evaluate its accuracy across diverse code generation tasks (Section~\ref{graph:main_results}), the dynamic adaptability of topology density and its superior cost-efficiency(Section~\ref{graph:cost_results}), the fine-grained comparison across difficulty level(Section~\ref{graph:Average_Topology_Density}), and additional experimental results(Appendix~\ref{appendix:results_all}).
\vspace{-1.15em}
\begin{table}[!t] 
\centering        
\caption{APPS results comparing AgentConductor with baselines on performance, cost, and average topology density.} 
\vspace{-0.4em}   
\label{tab:costs_and_topos} 

\setlength{\tabcolsep}{1.2pt} 
\renewcommand{\arraystretch}{1.2} 
\footnotesize   

\begin{tabular}{c|p{2.9cm}cccc} 
\toprule
\textbf{Dataset} & \textbf{Method} & \textbf{Perf.} & \textbf{Prompt} & \textbf{Comp.} &
\makecell{\textbf{$\mathcal{S}_{\text{complex}}$}\\\textbf{(↑)}} \\
\midrule
\multirow{8}{*}{APPS}
& AFlow & 35.4 & 531450 & 184800 & 3.7 \\
& FlowReasoner & 39.1 & 437250 & 148050 & 2.4 \\
& Chain-of-Agents (32B) & \underline{41.6} & \underline{334650} & \underline{134250} & \underline{4.1} \\
\cmidrule(lr){2-6}
& GPTSwarm & 36.5 & 381450 & 155400 & 3.5 \\
& AgentPrune (Layered) & 39.3 & 364950 & 141150 & 3.8 \\
& MacNet (Layered) & 36.9 & 472950 & 200100 & 2.9 \\
& G-Designer & 37.2 & 320550 & 139200 & 3.6 \\
& \cellcolor[RGB]{220,235,245}\textbf{AgentConductor (3B)}
& \cellcolor[RGB]{220,235,245}\textbf{58.8}
& \cellcolor[RGB]{220,235,245}\textbf{277600}
& \cellcolor[RGB]{220,235,245}\textbf{79800}
& \cellcolor[RGB]{220,235,245}\textbf{5.2} \\
\bottomrule
\end{tabular}

\vspace{-0.8em}
\end{table}

\subsubsection{Code Generation Performance}
\label{graph:main_results}

As shown in Table~\ref{tab:main_table} and Figure \ref{img:start_img_bc}(b) , our approach consistently achieves the highest accuracy across all five datasets. In the contest-level benchmarks, \textbf{AgentConductor} reaches pass@1 accuracies of 58.8\%, 46.3\%, and 38.8\% on APPS, LiveCodeBench (v4), and CodeContests, respectively, \textbf{outperforming the second-best methods by absolute margins of 14.6\%, 3.1\%, and 1.1\% percentage points}. In the basic code generation tasks, our method achieves pass@1 accuracies of 97.5\% on HumanEval and 95.1\%  on MBPP, \textbf{surpassing the second-best methods by absolute margins of 1.0\% and 0.7\% percentage points, respectively}(See Appendix \ref{appendix:Performance Analysis} for details).

\subsubsection{Comparison of Dynamic Topology Generation and Cost Efficiency}
\label{graph:cost_results}

In Table~\ref{tab:costs_and_topos} and Figure.~\ref{img:start_img_bc}(a), using the APPS dataset as a case study, we visually compare our approach with six alternative workflow and topology optimization methods to assess both \textbf{cost efficiency and average topology density}. For cost, we report the consumption of \textbf{Prompt Tokens} and \textbf{Completion Tokens}; for density, we adopt the average score \textbf{$\mathcal{S}_{\text{complex}}$} from Eq.~\ref{eq:graph_complexity}, where larger values indicate lower (sparser) topology density. The table shows that \textbf{AgentConductor} attains the lowest consumption of prompt tokens and the consumption of completion tokens and the highest average $\mathcal{S}_{\text{complex}}$ (i.e. the sparsest interaction topology), while still achieving the best accuracy. This indicates that, in contest-level code generation, our method delivers higher performance at lower cost.

\vspace{-0.8em}
\begin{figure}[H]
\centering
\includegraphics[width=0.95\linewidth]{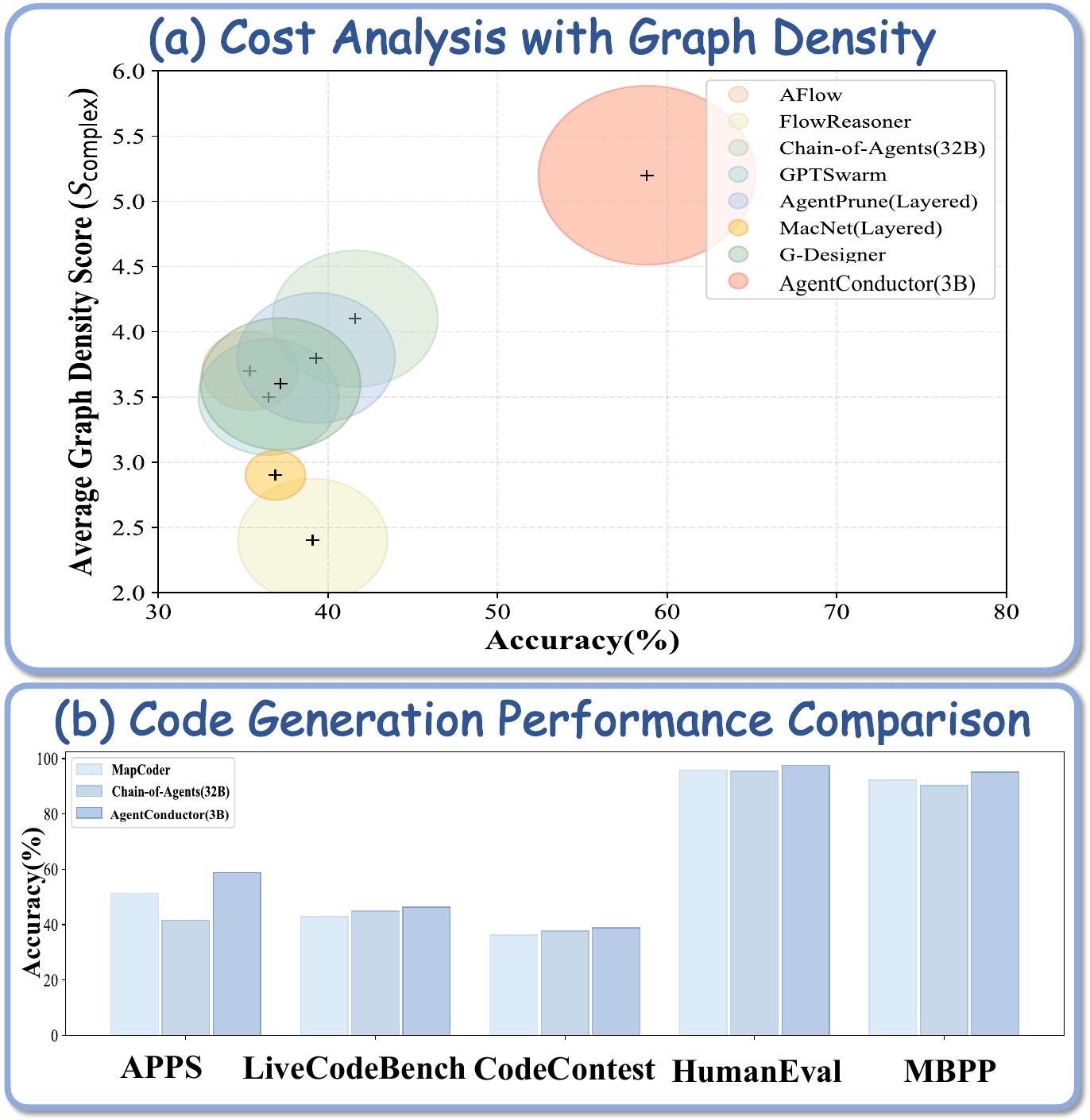}

\small
\caption{\textbf{(a)} APPS results showing performance, average graph density (\(\mathcal{S}_{\text{complex}}\)↑ sparser), and completion tokens, with circle size indicating token savings (diameter↑ more). \textbf{(b)} Code generation performance comparison of representative baselines.}
\vspace{-1em}
\label{img:start_img_bc}
\end{figure}
\label{graph:Average_Topology_Density}
Moreover, Figure~\ref{img:3_difficulty_density} presents a fine-grained comparison across difficulty levels on three contest-Level datasets . Our method modulates topology density with problem difficulty. It uses sparser graphs for easier instances and denser graphs for harder ones, thereby reducing token cost on easy cases while preserving accuracy on hard cases. In contrast, competing methods exhibit little or no density adaptation across difficulty, which leads to unnecessary token expenditure.


\subsection{Ablation Study}
\label{section:ablation_study}
\paragraph{Impact of Supervised Fine-tuning and Reinforcement Learning} We examine whether CoT-based SFT is necessary by comparing (i) direct RL without SFT and (ii) SFT followed by RL. We report three metrics to make the performance factors explicit: (1) \textbf{Performance}, measured by code-generation \texttt{pass@1}; (2) $\boldsymbol{\mathcal{S}_{\text{complex}}}$ for graph density; and (3) \textbf{Valid topology (\%)}, the percentage of topologies that satisfy the formatting constraints and the difficulty-specific density cap. From Table~\ref{tab:ablation}, the SFT stage is crucial for producing valid and executable topologies: small open-source backbones trained without SFT rarely meet the required format and density, and consequently fail to produce correct code. In contrast, SFT only (without RL) attains a moderate valid-topology rate; 

\begin{table}[!t]
\centering
\caption{Ablation study on Training Strategies and Reward Design.}
\vspace{-0.4em}
\label{tab:ablation}

\setlength{\tabcolsep}{1.6pt}
\renewcommand{\arraystretch}{1.2}
\scriptsize  

\begin{tabular}{c|p{2.25cm}ccc|ccc}
\toprule
\multirow{3}{*}{\makecell{\footnotesize\bfseries Group}} & \multirow{3}{*}{\makecell{\footnotesize\bfseries Method}}
& \multicolumn{3}{c}{\textbf{APPS}} & \multicolumn{3}{c}{\textbf{HumanEval}} \\
\cmidrule(lr){3-5}\cmidrule(lr){6-8}
& 
& \textbf{Perf.} 
& \makecell{\textbf{$\mathcal{S}_{\text{complex}}$}\\\textbf{(↑)}} 
& \makecell{\textbf{Valid}\\\textbf{(\%)}}
& \textbf{Perf.} 
& \makecell{\textbf{$\mathcal{S}_{\text{complex}}$}\\\textbf{(↑)}} 
& \makecell{\textbf{Valid}\\\textbf{(\%)}} \\
\midrule

\cellcolor[RGB]{220,235,245}\,
& \cellcolor[RGB]{220,235,245}\textbf{Full Model}
& \cellcolor[RGB]{220,235,245}\textbf{58.8}
& \cellcolor[RGB]{220,235,245}\textbf{5.2}
& \cellcolor[RGB]{220,235,245}\textbf{100}
& \cellcolor[RGB]{220,235,245}\textbf{97.5}
& \cellcolor[RGB]{220,235,245}\textbf{5.8}
& \cellcolor[RGB]{220,235,245}\textbf{100} \\
\midrule

\multirow{2}{*}{\makecell{\textbf{Training}\\\textbf{Strategies}}}
& \textit{w/o} SFT
& -- & -- & 15
& -- & -- & 13 \\
& \textit{w/o} RL
& 29.8 & 2.7 & 56.5
& 90.2 & 3.2 & 57.2 \\
\midrule

\multirow{6}{*}{\textbf{Reward}}
& \textit{w/o} $r_e(\mathcal{E}_{\text{yaml\_errors}})$
& 30.3 & 2.9 & 56.8
& 91.4 & 3.0 & 58.1 \\
& \textit{w/o} $r_e(\mathcal{E}_{\text{code\_errors}})$
& 35.5 & 5.0 & 96.4
& 93.1 & 5.6 & 99.2 \\
& \textit{w/o} $S_{\text{node}}$
& 49.2 & 3.8 & 85.8
& 96.9 & 4.8 & 87.2 \\
& \textit{w/o} $S_{\text{edge}}$
& 45.5 & 4.5 & 89.3
& 96.1 & 4.6 & 90.5 \\
& \textit{w/o} $S_{\text{diameter}}$
& 48.3 & 3.9 & 91.7
& 95.3 & 4.1 & 93.4 \\
& \textit{w/o} $r_g(\mathcal{G}^{(k)})$
& 52.6 & 3.0 & 83.2
& 97.2 & 3.4 & 85.6 \\
\bottomrule
\end{tabular}

\vspace{-2em}
\end{table}

\paragraph{Impact of Multi-objective Reward Design} 
Table~\ref{tab:ablation} summarizes the impact of individual reward components on model performance. We observe that the YAML-format error term $r_e(\mathcal{E}_{\text{yaml\_errors}})$ has the strongest effect on the valid-topology rate, whereas the code-execution error term $r_e(\mathcal{E}_{\text{code\_errors}})$ most strongly affects code accuracy (pass@1). The three topology-density sub-rewards $S_{\text{node}}$, $S_{\text{edge}}$, and $S_{\text{diameter}}$ influence both density control and accuracy to different extents, with \textit{w/o} $S_{\text{node}}$ causing the largest degradation in code-generation performance. Lower topology density (especially without  $r_g(\mathcal{G}^{(k)})$) can reduce accuracy by limiting agents and interactions. With the full reward, optimizing density and accuracy together guides the policy to suitable interaction patterns and densities, boosting performance while keeping token usage efficient.

\begin{figure*}[!t]
\centering
\includegraphics[width=0.96\textwidth]{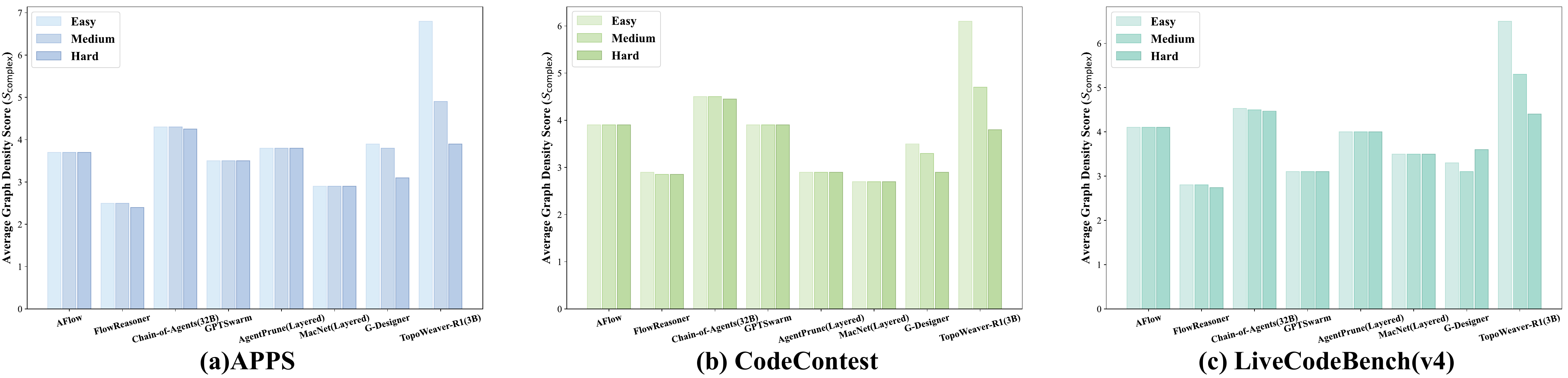}
\vspace{-0.65em}
\small
\caption{Comparison of the average topology density (\(\mathcal{S}_{\text{complex}}\)↑ sparser) across three competition-level code datasets at three difficulty levels.}
\label{img:3_difficulty_density}
\vspace{-2em}
\end{figure*}

\section{Related Works}
\subsection{LLM-Based MAS for Code Generation}
LLM-based multi-agent systems have shown promise in code generation\citep{huang2023agentcoder, nunez2024autosafecoder, ishibashi2024self}. Frameworks such as MetaGPT\citep{hong2024metagpt} and AutoGen\citep{wu2024autogen} introduce software development workflows and role-playing to enhance collaboration. These approaches, however, face challenges in competition-level settings, which demand deeper algorithmic reasoning and precise implementation. MapCoder\citep{islam2024mapcoder} using multi-round planning, retrieval scoring, and algorithmic tutorials to achieve notable results. Still, since competition problems vary widely in difficulty, fixed agent frameworks often incur unnecessary overhead—such as redundant interaction and roles—on simpler tasks, motivating more adaptive solutions.

\subsection{Topology Optimization and Generation for MAS}
Recent works \citep{zhuge2024gptswarm, zhang2024aflow} have explored optimizing interaction topologies in multi-agent systems to improve efficiency. Graph pruning methods, such as AgentPrune \citep{zhang2024cut} and AgentDropout\citep{wang2025agentdropout}, iteratively reduce interaction graphs to a minimal structure. However, these rely on a fixed topology per task. Dynamic orchestration methods\citep{zhang2025multi, dang2025multi} select a topology through multi-round optimization but still finalize it before execution. Generation-based approaches like G-Designer\citep{zhang2024g} produce a topology from problem descriptions, allowing finer adaptation but remaining static thereafter. A common limitation is the tendency to converge to uniformly sparse structures, lacking fine-grained difficulty awareness.

Agentic reinforcement learning (RL) methods\citep{wang2025ragen, jin2025search} have recently introduced new paradigms for large language models, enabling them to move beyond single-turn outputs toward multi-turn interactions with the environment and tool usage. These approaches optimize the model by incorporating external tools or agent–environment interactions into the agent’s output as part of a complete trajectory, thereby endowing the agent with the capability of multi-round interaction with its environment. Inspired by this line of work, several studies have further explored end-to-end optimization of agent workflows by leveraging full interaction trajectories, as seen in FlowReasoner\citep{gao2025flowreasoner} and Chain-of-Agents\citep{li2025chain}. While FlowReasoner introduces local parallelism within certain operator blocks, it still cannot express rich graph-structured interactions; Chain-of-Agents, in contrast, follows a purely sequential workflow without any parallel branches.
Departing from these lines, we propose an Agentic RL-based approach centered on a central orchestrator that dynamically generates and iteratively refines interaction topologies in natural language, conditioned on execution feedback.  A key innovation is a difficulty-aware density reward, which explicitly modulates topology sparsity according to problem difficulty. 
\section{Conclusion}
In summary, AgentConductor establishes a new paradigm for competition-level code generation by integrating difficulty-aware reinforcement learning with multi-turn topology evolution. By training an orchestrator agent to dynamically generate and refine interaction topologies through execution feedback and density-aware rewards, our method achieves fine-grained adaptability across problem difficulties. This paradigm advances multi-agent code generation toward systems that are not only accurate, but also cost-efficient and scalable.

\section{Impact Statement}

This paper presents work whose goal is to advance the field of Machine Learning. There are many potential societal consequences of our work, none which we feel must be specifically highlighted here.
\nocite{langley00}

\bibliography{example_paper}
\bibliographystyle{icml2026}

\newpage
\appendix
\onecolumn
\section{Supplementary Experimental Setup}

\subsection{Supplementary Details on Baselines}
\label{appendix:baselines}
To provide a comprehensive comparison and highlight the effectiveness of our approach, we evaluate against four categories of baselines:  
\textbf{(1)Vanilla: } This setting reflects the capability of a single backbone model. We adopt \texttt{GPT-4o-mini} as the representative backbone.  \textbf{(2)Classical Multi-Agent Systems: } This category includes three representative frameworks: \texttt{AutoGen}\citep{wu2024autogen} is a general-purpose multi-agent framework, \texttt{MetaGPT}\citep{hong2024metagpt} is designed for generic coding tasks, and \texttt{MapCoder}\citep{islam2024mapcoder}targets competitive programming code generation. 
\textbf{(3)Multi-Agent Systems with Workflow Optimization: } This category comprises three systems: \texttt{AFlow}\citep{zhang2024aflow} leverages search-based methods to optimize the workflow, while \texttt{FlowReasoner}\citep{gao2025flowreasoner} and \texttt{Chain-of-Agents} are recent reinforcement learning approaches that optimize multi-agent workflows end-to-end. \textbf{(4)Multi-Agent Systems with Topology Optimization.} This category covers \texttt{GPTSwarm}\citep{zhuge2024gptswarm}, \texttt{AgentPrune}\citep{zhang2024cut}, \texttt{G-Designer} \citep{zhang2024g}, and \texttt{MacNet}\citep{qian2024scaling}. These approaches explicitly focus on optimizing the agent interaction topology. 

For multi-agent baselines, \textbf{we align the role definitions and system prompts with those used in our method}. For workflow and topology optimization methods, we set the maximum number of participating agent nodes to \textbf{20}. This matches the upper bound of topology density in our framework when solving the most challenging problems with up to two interaction turns, ensuring a fair comparison. Following the setup in MacNet, we note that our topology can be viewed as an evolved variant of layered graphs. Our topology exhibits an intermediate density, between complex and layered graphs.  \textbf{To ensure comprehensive and reliable evaluation, we therefore compare \textit{AgentPrune} and \textit{MacNet} under both complex-graph and layered-graph initialization settings.}
\subsection{Implementation Details}
\label{appendix:Implementation Details}
For AgentConductor, we use Qwen2.5-3B-Instruct \citep{qwen2.5}  as
the backbone. During the SFT stage, we adopt the LLaMA-Factory framework \citep{zheng2024llamafactory} for training. Specifically, we utilize 4500 synthetic samples constructed from three contest-level code generation datasets across three difficulty levels (see Section \ref{graph:sft} for details). The training is performed with an initial learning rate of $1 \times 10^{-4}$, a batch size of 4, and LoRA-based fine-tuning, while all other hyperparameters are kept at their default values.
During the reinforcement learning stage, we implement GRPO using the Verl \citep{sheng2025hybridflow} framework with vLLM for generation(code development based on Search-R1 \citep{jin2025search}). We set the group size to $G=8$, with a batch size of 8, a learning rate of $1 \times 10^{-6}$, a policy temperature of 1, and a maximum completion length of 4096 tokens. To balance performance and computational cost, we further limit the maximum number of turns (i.e., multi-agent interaction turns) to 2. Throughout training, individual agents are executed with \texttt{gpt-4o-mini} and interact in real time with a code execution sandbox to obtain authentic runtime feedback. Both stages are conducted on a 4-GPU A800 cluster.
\subsection{Progressive Quality Filtering for SFT Data}
\label{appendix: sft_quality}
Our training data consist of valid, executable, and semantically correct topologies generated by GPT-4o-mini under code-oriented tasks. All data are produced using the same role configuration and topology density constraints adopted in our orchestrator. The second-turn interaction topologies are real and valid structures obtained from actual error messages and historical multi-agent logs, rather than synthetic approximations.

We first perform strict YAML syntax verification to ensure that each example is well-formed and can be parsed by standard YAML loaders. This step guarantees that all topologies can be safely converted into JSON objects for subsequent processing, preventing malformed or incomplete structures from entering the dataset. Second, we apply semantic validation using a predefined JSON\_SCHEMA. After converting each YAML topology into JSON, we verify that it satisfies all orchestration constraints. The validation rules include: (1) The ref field of all agents in the first timestep must be empty.
(2) For every agent, all agent IDs listed in its ref field must correspond to agents that have appeared in earlier timesteps. These schema-level checks ensure the structural consistency and logical correctness of the generated topologies. We further remove duplicate topologies and preserve only those that successfully interact with the execution environment. This step ensures that the topologies are not merely syntactically valid but are also actionable and executable within the orchestrator runtime. All remaining samples are re-validated using GPT-4o-mini to ensure semantic soundness, consistency, and correctness. Finally, we manually inspect a randomly sampled 5\% subset of the data to further confirm high-quality labeling and structural validity.

\subsection{System Prompt for Orchestrator Agent}
\begin{figure}[H]
\centering
\includegraphics[width=0.96\linewidth]{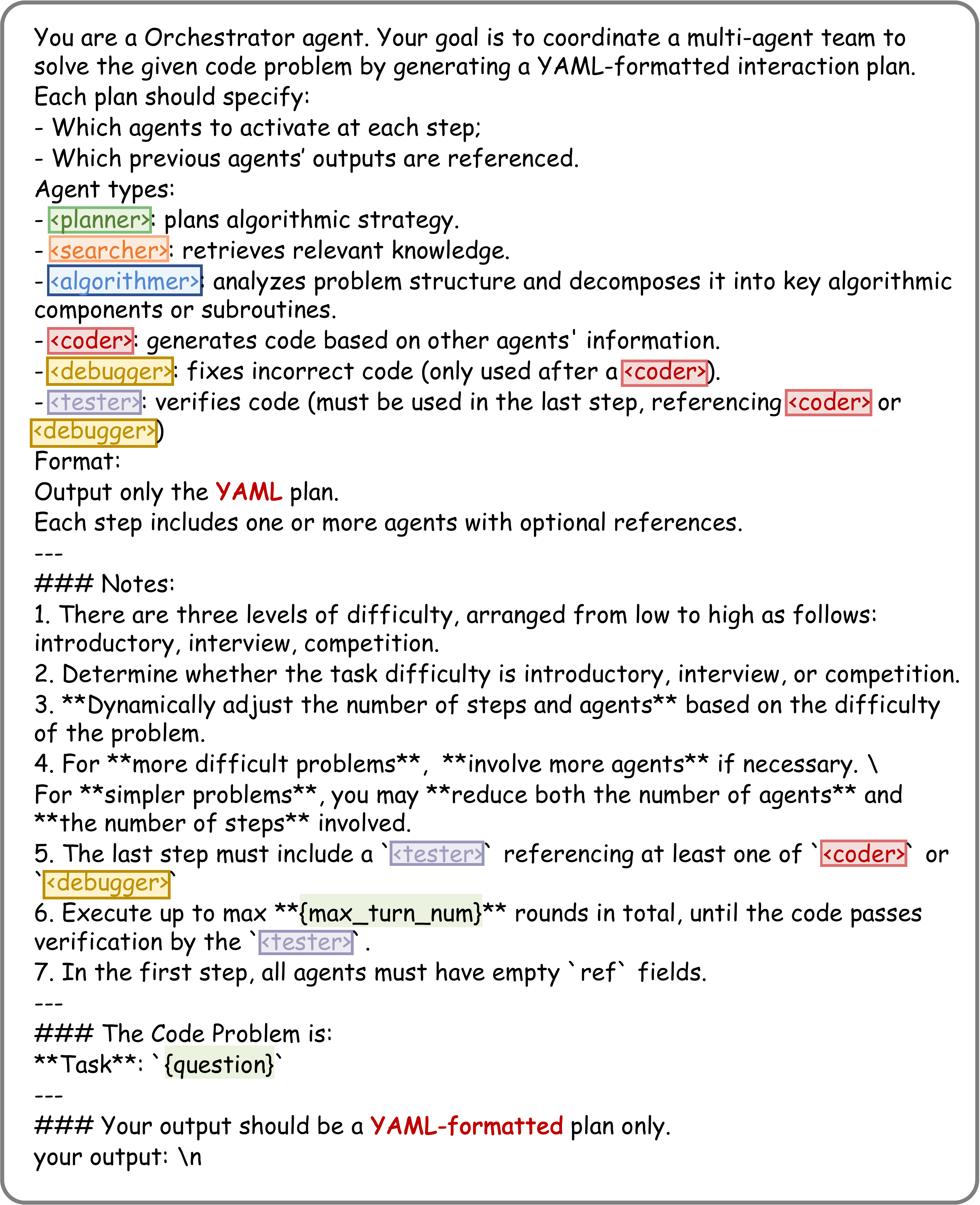}
\vspace{-0.65em}
\caption{The figure shows the system prompt for the orchestrator agent.}
\end{figure}
We show in the figure the system prompt of the trained orchestrator agent.
\section{Additional Experimental Results}
\label{appendix:results_all}
\subsection{Code Generation Performance Analysis}
\label{appendix:Performance Analysis}
We observe that \textbf{MetaGPT}, a code-oriented multi-agent framework with a fixed interaction scheme, achieves the second-best performance on average. Among optimization-oriented approaches, the two end-to-end reinforcement learning methods, \textbf{FlowReasoner} and \textbf{Chain-of-Agents}, rank next and narrowly trail \textbf{MetaGPT} in average results. By contrast, topology optimization methods underperform, likely because their learned topologies remain comparatively rigid and struggle to adapt to the highly variable and complex nature of competitive programming tasks. \textbf{G-Designer} is a method that generates interaction graphs based on the given problem. However, we observe that although these methods are adapted to different tasks, the difficulty of competition-level problems is hard to distinguish intuitively, and thus such adaptations do not lead to significant improvements in code performance.  Within this family, \textbf{AgentPrune} and \textbf{MacNet} perform better under layered-graph initialization, suggesting that for relatively sequential code-generation tasks, layered graphs provide a more suitable inductive bias than unstructured complex graphs.  \textbf{Building on this,  
AgentConductor retains the inductive bias of layered graphs yet adapts dynamically per problem, yielding state-of-the-art overall accuracy.}
\begin{figure}[H]
\centering
\includegraphics[width=0.96\linewidth]{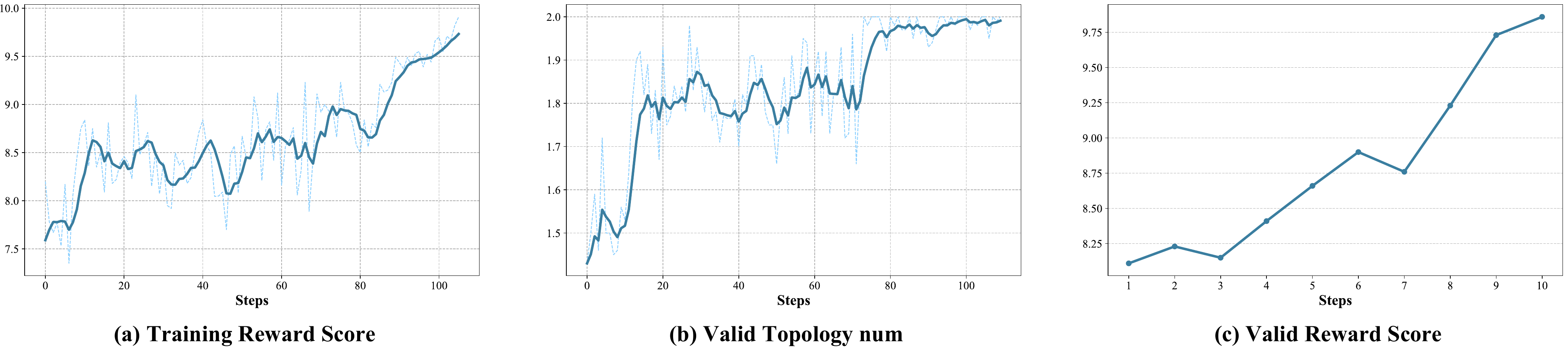}
\vspace{-0.65em}
\small
\caption{The figure shows the dynamics of three key metrics during RL training: (a) training reward, (b) average number of valid two-turn topologies, and (c) validation reward. The results indicate that our method progressively converges toward generating topologies with reasonable density and achieving accurate code problem solving in later training stages.}

\label{img:rl_curves}

\end{figure}

\vspace{-0.5em}
\subsection{Analysis on the RL Training Curve}
To better understand the training dynamics of the reinforcement learning stage, we plot the trajectories of (i) the average reward, (ii) the count of topologies passing the density check, and (iii) the validation score over the first 110 RL training steps (Figure~\ref{img:rl_curves}). Our key observations are as follows: all three metrics increase steadily with training, indicating that the self-critic RL procedure is stable and makes consistent progress. These results further demonstrate that our method trains effectively and remains stable. 
\clearpage

\subsection{Case Study}
\begin{figure}[H]
\centering
\includegraphics[width=0.75\linewidth]{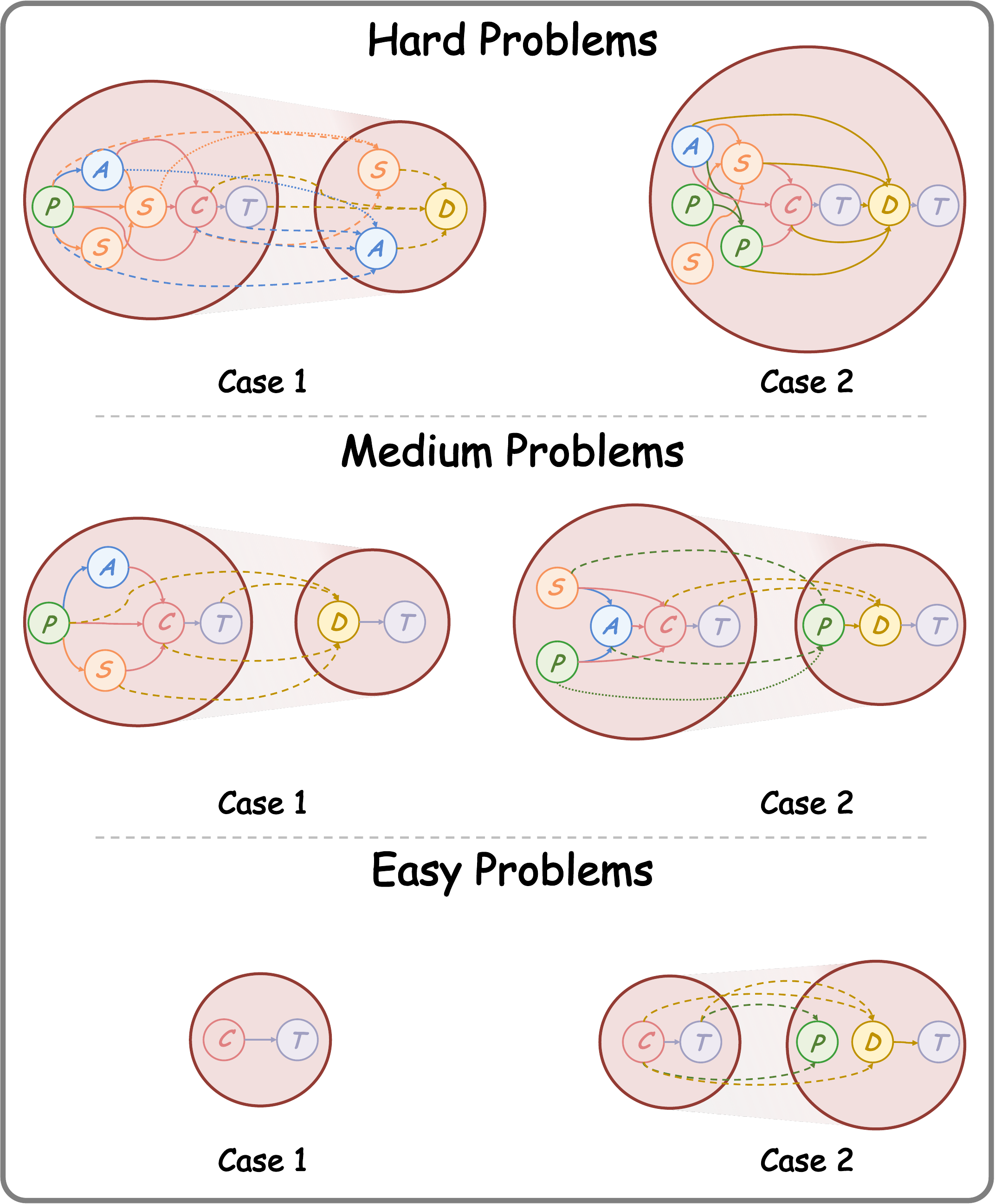}

\small
\caption{The figure shows the generated interaction topologies for two problem cases at each difficulty level.}
\label{img:rl_curves}
\vspace{-1.6em}
\end{figure}

Based on the generated cases shown in the figure, our method exhibits the following characteristics. First, it can generate different initial interaction topologies tailored to the characteristics of individual problems, with topology density varying according to difficulty. Second, the method dynamically adjusts the second-round topology based on the execution results of the first round; this adjustment does not necessarily reduce the number of agents, as additional agents may be introduced when errors occur. Finally, when agents from the first round reappear in the second round, their behavior evolves according to their prior outputs, thereby achieving iterative evolution. These characteristics highlight the customizability and adaptability of our approach, which in turn enhance system performance while reducing costs in a fine-grained manner.
\subsection{Zero-Shot Transfer to Unseen Roles and Task Types}
\label{appendix:zero_shot_transfer}

To evaluate the transferability of our orchestrator to unseen problem types and newly introduced agent roles, we conducted a small-scale study on 50 filtered samples from the GAIA \citep{mialon2023gaia} dataset. These samples were strictly restricted to tasks where the inputs consist solely of single-modality textual descriptions, which differ substantially from the code-generation domain used for training.

No additional training was performed. Instead, we expanded the orchestrator’s role pool by adding two previously unseen roles: an online search agent \texttt{<online\_searcher>} and a visual validation agent \texttt{<visual\_checker>}, together with their corresponding tool interfaces. Using the original trained model, the orchestrator was able to \emph{naturally integrate} these new roles into the generated interaction topologies, despite never encountering them during SFT or RL training.

Under this strict zero-shot transfer setting, the framework achieved a success rate of \textbf{15.8\%} on the selected GAIA samples, demonstrating that the orchestrator exhibits non-trivial generalization to unseen domains, unseen task types, and unseen agent capabilities.
\subsection{Supplementary Cross-Domain Experiments}
\label{appendix:cross_domain}
While our method was initially designed with a focus on competition-level code generation, this focus was a deliberate choice rather than a limitation. Competition-level tasks provide a highly challenging and well-instrumented testbed that allows us to rigorously examine dynamic topology evolution under strict execution feedback, token constraints, and difficulty-aware limits. Aligned with our research interests, our goal was to develop a specialized multi-agent orchestration algorithm for this domain, offering a complementary perspective to prior multi-agent architecture studies that emphasize broad task coverage. Nevertheless, our method is inherently generalizable. To address the reviewer's concern, we additionally evaluate the cross-domain applicability of our approach.

Following the role definitions and data filtering strategy used in Chain-of-Agents~\cite{chainofagents}, we expanded the agent role pool in our orchestrator’s system prompt. The newly introduced roles include: 
\texttt{<online\_searcher>} for web-based retrieval,
\texttt{<thinker>} for complex reasoning,
\texttt{<verifier>} for answer verification, and
\texttt{<planner>} for task decomposition and high-level orchestration.
All roles were redefined and implemented for reasoning-centric tasks. We selected subsets from three representative datasets---\textbf{GAIA\citep{mialon2023gaia}}, \textbf{HLE\citep{phan2025humanity}}, and \textbf{PopQA\citep{mallen2023not}}---to evaluate multi-hop reasoning and question answering.

For the reward function, we retain the YAML validation and topology-density components, which remain general across tasks and domains. To adapt the pipeline, we replace the code-execution validator with an LLM-based answer validator and simplify the reward to a binary scheme: 1 for correctness and 0 otherwise. All other training and inference settings remain unchanged. We retrained our model under this configuration and report results below.

\begin{table}[h]
\centering
\caption{Cross-domain evaluation of AgentConductor on GAIA, HLE, and PopQA. Results are reported as mean $\pm$ std over three seeds.}
\label{tab:cross_domain}
\vspace{0.5em}

\setlength{\tabcolsep}{3pt} 
\renewcommand{\arraystretch}{1.15}
\fontsize{7.5pt}{9.5pt}\selectfont

\resizebox{\textwidth}{!}{%
\begin{tabular}{lccccccc}
\toprule
\textbf{Method} & \textbf{Backbone} & 
\textbf{GAIA L1} & \textbf{GAIA L2} & \textbf{GAIA L3} & \textbf{GAIA Avg.} &
\textbf{HLE Avg.} & \textbf{PopQA} \\
\midrule
Chain-of-Agents & 7B & 
$69.2_{\text{(±0.8)}}$ & $50.9_{\text{(±0.7)}}$ & $33.3_{\text{(±1.1)}}$ & $50.8_{\text{(±0.8)}}$ & $18.0_{\text{(±0.6)}}$ & $46.5_{\text{(±1.3)}}$ \\
\rowcolor[RGB]{225,240,250}
\textbf{AgentConductor (ours)} & \textbf{3B} & 
\textbf{$72.0_{\text{(±0.4)}}$} & 
\textbf{$53.4_{\text{(±0.3)}}$} & 
\textbf{$36.1_{\text{(±0.5)}}$} &
\textbf{$53.8_{\text{(±0.4)}}$} &
\textbf{$22.6_{\text{(±0.2)}}$} &
\textbf{$50.3_{\text{(±0.3)}}$} \\
\bottomrule
\end{tabular}
}
\end{table}

The results demonstrate that AgentConductor outperforms Chain-of-Agents across all datasets despite using a considerably smaller backbone (3B vs.\ 7B). Our method achieves strong accuracy and maintains low variance across seeds, highlighting both the robustness and adaptability of the proposed topology optimization framework. These findings provide further evidence that our approach generalizes beyond code generation and can be transferred to new reasoning-oriented domains with minimal modification.

\section{Detailed Definitions of Topology Notions}
\label{appendix:notions}
\paragraph{Agent Node Notations}  
Each agent node $v_i^{(k)}$ is defined as:
\[
v_i^{(k)} = \left\{
\textsf{Type}_i,\ \textsf{Base}_i,\ \textsf{Role}_i^{(k)},\ \textsf{View}_i^{(k-1)},\ \textsf{Mem}_i^{(<k)}
\right\}
\tag{1}
\]

The \textsf{Type}$_i$ field specifies one of three agent categories: (1) The Orchestrator agent is a locally deployed large language model (LLM) proposed and trained in this work, designed to generate multi-turn YAML interaction topologies in an end-to-end orchestrator and to manage the execution of multiple agents; (2) The LLM-agent is a prompt-conditioned LLM (open-source or via API) that is assigned a role; and (3) the ToolAgent, which is equipped with callable external APIs such as retrieval engines or code execution tool. $\textsf{Role}_i^{(k)}$ is the turn-specific role/prompt (e.g., \texttt{<planner>}, \texttt{<coder>}). \textsf{View}$_i^{(k-1)}$ is the orchestrator-curated visible context for this agent, including selected outputs from its dependencies and possibly from last turn. Finally, $\textsf{Mem}_i^{(<k)}$ stores the cross-turn history of agent $i$ prior to turn $k$.

\paragraph{Notations for Agent Communication Edges}  
In our framework, the edge set is constructed directly from the \texttt{ref} fields specified in the YAML plan, and we categorize edges into three types. First, \emph{intra-turn edges} $\mathcal{E}^{\text{intra}} \subseteq \mathcal{V}^{(k)} \times \mathcal{V}^{(k)}$ connect agents within the same turn according to their declared references. Second, \emph{inter-turn cross-agent edges} $\mathcal{E}^{\text{cross}} \subseteq \mathcal{V}^{(k-1)} \times \mathcal{V}^{(k)}$ capture dependencies across two consecutive turns when an agent in turn $t$ explicitly references outputs from other agents in turn $k\!-\!1$. Third, \emph{inter-turn self-edges} $\mathcal{E}^{\text{self}} \subseteq \{(v_i^{(k-1)}, v_i^{(k)}) \mid v_i \in \mathcal{V}\}$ are automatically added whenever the same agent is invoked across two consecutive turns, allowing it to incorporate and refine its own previous outputs.

\paragraph{Orchestrator-Guided Multi-Agent Interaction.}
Given a task $x$, the orchestrator agent emits a YAML plan for turn $k$. The plan tokens are sampled from the orchestrator policy and deterministically decoded into a strict layered DAG $\mathcal{G}^{(k)}$ (see Eq.~\ref{eq:decode-topo}). The node set $\mathcal{V}^{(k)}$ is instantiated with \emph{LLM-agents} and \emph{ToolAgents}; execution follows the step (layer) order implied by $\mathcal{G}^{(k)}$: agents within the same step run in parallel, and there are no intra-step edges. We intentionally exclude intra-step interaction to facilitate parallel execution and reduce scheduling complexity. Although a fully connected DAG allows richer expressiveness, we find that enforcing structural sparsity within steps improves interpretability, efficiency, and learning stability. For a node $v_i^{(k)}$, the turn-$k$ output is produced as
\begin{equation}
\begin{aligned}
M_i^{(k)} &\sim \mathcal{P}_{\theta_i}\!\big(M \mid x,\ \textsf{Role}_i^{(k)},\ \textsf{View}_i^{(k-1)},\ \textsf{Mem}_i^{(<k)},\ \{M_j^{(k)}:(v_j^{(k)},v_i^{(k)})\in\mathcal{E}^{(k)}\}\big).
\end{aligned}
\end{equation}

$\mathcal{E}^{(k)}$ is the intra-turn dependency set (a strict layered DAG) parsed from the YAML \texttt{ref} fields; $\{M_j^{(k)}:(v_j^{(k)},v_i^{(k)})\in\mathcal{E}^{(k)}\}$ collects the outputs of all in-neighbors of $v_i^{(k)}$ in turn $k$; $\textsf{Role}_i^{(k)}$ is the turn-specific role/prompt of $v_i$; $\textsf{View}_i^{(k-1)}$ is the orchestrator-curated summary of the previous turn (topology/error cues) provided as read-only context; $\textsf{Mem}_i^{(<k)}$ is the agent-local cross-turn memory prior to turn $k$; $\mathcal{P}_{\theta_i}$ denotes the agent’s conditional kernel (LLM likelihood for language agents; deterministic operator such as retriever $r$ or executor $\xi$ for ToolAgents); and $M_i^{(k)}$ is the outputs produced by $v_i^{(k)}$ in turn $k$.

After execution, each agent appends its output to its memory, $\textsf{Mem}_{i}^{(\leq k)} = \bigcup_{t=1}^{k} \{M_i^{(t)}\}$. Each turn concludes with a tester agent that executes the candidate code and returns a status $s^{(k)}$, which can either be \PASSED \, or one of the errors from the set $\mathcal{E}_{\text{errors}}$ defined in Eq.\ref{eq:error-types}. If $s^{(k)}=\PASSED \,$, the process stops and the solution is accepted. 
Otherwise, the orchestrator agent collects the observation $\mathcal{O}^{(k)} = \left\{ \mathcal{E}_{\text{errors}}, \ \mathcal{L}_{\text{logs}}, \ \mathcal{G}^{(k)} \right\}$, which includes error types $\mathcal{E}_{\text{errors}}$, execution logs $\mathcal{L}_{\text{logs}}$, and the turn-$k$ topology trace $\mathcal{G}^{(k)}$. Based on the observation, the orchestrator agent generates the next-turn interaction graph via Eq.\ref{eq:topo_seq} and Eq.\ref{eq:decode-topo}. 
During this process, the orchestrator decides which agents to \emph{reuse} from memory, which to \emph{rerun}, and which to \emph{activate}. The orchestrator continues to regenerate the topology for each turn as needed until the code result is \PASSED \, or the maximum number of turns $K$ is reached.
\begin{definition}\label{def:dag}
    For a strict layered DAG $\mathcal{G}^{(k)}$, the node set $\mathcal{V}^{(k)}$ is divided into $b$ independent sets $\{ \mathcal{V}^{(k)}_1, \dots,\mathcal{V}^{(k)}_b \}$ with a well-defined layer structure. It has the following properties:
    
    (\textbf{Sequentiality}) for any edges $(u, v)$, it satisfies that $u \in \mathcal{V}^{(k)}_i$, $v \in \mathcal{V}^{(k)}_j$, and $i < j$.
    
    (\textbf{Conciseness}) for any nodes $u \in \mathcal{V}^{(k)}_i$ where $i \neq b$, there must exist an edge $(u, v)$ such that $v \in \mathcal{V}^{(k)}_j$, where $i < j$.
\end{definition}
\subsection{Algorithm Workflow of AgentConductor}\label{appendix:alg}
We conclude the overall algorithm workflow of AgentConductor in Algorithm~\ref{alg:workflow}
\begin{algorithm}[t]
    \caption{Online Topology Generation Workflow of AgentConductor}
    \label{alg:workflow}
    \begin{algorithmic}[1] 
        \REQUIRE{Input query $x$, Policy model $\pi_\theta$, Maximum Rounds $K$} 
        \ENSURE{Final output $z$}
            \STATE initialize history $H_{\cdot}$
            \STATE initialize local memory $\{ \mathrm{Mem}_i \}$ for each agent $v_i$
            \STATE initialize $z \leftarrow \varnothing$
            \FOR{round $k\leftarrow 1$ to $K$}
                \STATE $o_k=(o_{k,1},\ldots,o_{k,|o_k|}) \sim \pi_\theta(\cdot|x,H_k)$
                \IF{no valid YAML detected in $o_k$}
                    \STATE $y_k \leftarrow \mathrm{YAMLCheck}(o_k)$
                    \STATE $H_{k+1}\leftarrow H_{k}.\text{append}((o_k,y_k))$
                    \STATE continue
                \ENDIF
                \STATE $\mathcal{G}^{(k)} = \mathrm{DecodeTopo}(o_k)$
                \STATE $z_k = (z_k^{\text{roles}},z_k^{\text{code}}) \leftarrow \mathrm{ExecRun}(x,\mathcal{G}^{(k)},H_k)$
                \IF{\PASSED \ in $z^{\text{code}}_k$}
                    \STATE break \COMMENT Early stopping
                \ENDIF
                \STATE $H_{k+1}\leftarrow H_{k}.\text{append}((\mathcal{G}^{(k)},z_k))$
                \STATE $z \leftarrow z \  + \  z_k$
            \ENDFOR \\
            \STATE \textbf{return} final output $z$
            
        \STATE \textbf{Procedure} $\mathrm{ExecRun}(x,\mathcal{G}^{(k)},H_k)$
            \STATE initialize $z_k^{roles} \leftarrow \varnothing$
            \FOR{$\mathrm{layer} \  \text{in}\ \mathcal{G}^{(k)}$}
                \STATE Run $\{ v_i\ |\ v_i \in \mathrm{layer} \}$ in parallel:
                \STATE \quad $M_i^{(k)} \sim \mathcal{P}_{\theta_i}\!\big(M \mid x,\ \textsf{Role}_i^{(k)},\ \textsf{View}_i^{(k-1)},\ \textsf{Mem}_i^{(<k)},\ \{M_j^{(k)}:(v_j^{(k)},v_i^{(k)})\in\mathcal{E}^{(k)}\}\big)$
                \STATE \quad Add $M_i^{(k)}$ to $\textsf{Mem}_i$
                \STATE \quad $z_k^{roles} \leftarrow z_k^{roles}+M_i^{(k)}$
            \ENDFOR
            \STATE Extract code $\mathrm{code}_k$ from $z_k^{roles}$
            \STATE $z_k^{\mathrm{code}} \leftarrow \mathrm{tester}(\mathrm{code}_k)$ \\
            \STATE \textbf{return} $(z_k^{\mathrm{roles}}, z_k^{\mathrm{code}})$
        \STATE \textbf{End Procedure}
    \end{algorithmic}
\end{algorithm}

\subsection{Theoretical Derivation and Proof of Topology Density}
\label{appendix:density}

\paragraph{From Token Cost to Topology Density} 
In order to achieve the goal of cost saving, we define the topology density based on the cost efficiency. Now we give the mathematical derivation here to show that in MAS, the complexity of agent interactions can be formally mapped into graph properties to quantify operational costs.

We first model the interaction per round as a graph \( \mathcal{G}^{(k)} = (\mathcal{V}^{(k)}, \mathcal{E}^{(k)}) \), where vertices \( \mathcal{V}^{(k)} \) represent agents and edges \( \mathcal{E}^{(k)} \) capture dependency relationships in round $k$. 

To eliminate the influence of difficulty on topology scale, we prefer the average cost on each agent. For each agent, the token cost mainly consists of three parts: the prompt, the reference information and the output. To simplify this process, we have the following assumptions. 
(1) the length of prompt and output is the same and fixed for every agent, denoted as $m$. 
(2) As for the round $k$, we must take the information from the previous rounds into account. So we assume that each agent has additional $|\mathcal{V}^{(k-1)}| \times m$ tokens as its input. 
(3) Under the same level of difficulty, $|\mathcal{V}^{(i)}| \approx |\mathcal{V}^{(j)}|$ for $\forall i,\ j \leq k$.

The total cost can be approximately expressed in the following form:
\begin{equation}\label{eq:cost_total}
\mathcal{C}_{\text{total}} = \sum_{i}^{|\mathcal{V}^{(k)}|} \; m + m \times |\mathcal{V}^{(k-1)}| + m \times |Agent_i[\text{ref}]|| + m \times |W_{\text{ref}}(Agent_i)|,
\end{equation}
where $W_{\text{ref}}(Agent_i)$ is defined as $\{ a\ |\ Agent_i \in a[\text{ref}] \}$, which contains all agents that have referenced $Agent_i$. This expression can be further simplified to Eq.~\ref{eq:cost_total2}
\begin{equation}\label{eq:cost_total2}
\mathcal{C}_{\text{total}} = m \times (|\mathcal{V}^{(k)}| + |\mathcal{V}^{(k)}| \cdot |\mathcal{V}^{(k-1)}| + \sum^{|\mathcal{V}^{(k)}|}_i\;(|Agent_i[\text{ref}]| + |W_{\text{ref}}(Agent_i)|)).
\end{equation}
Notice that \(\sum^{|\mathcal{V}^{(k)}|}_i\;|Agent_i[\text{ref}]| =\sum^{|\mathcal{V}^{(k)}|}_i\;|W_{\text{ref}}(Agent_i)|=|E|\), the total cost is given by Eq.~\ref{eq:cost_total3}.
\begin{equation}\label{eq:cost_total3}
\mathcal{C}_{\text{total}} = m \times (|\mathcal{V}^{(k)}| + |\mathcal{V}^{(k)}| \cdot |\mathcal{V}^{(k-1)}| + 2|E|).
\end{equation}
With the assumption (3), the average cost for each agent is given by Eq.~\ref{eq:cost_per_agent}.
\begin{equation}\label{eq:cost_per_agent}
\mathcal{\bar C} = m \times (1 + |V| + 2\frac{|E|}{|V|}).
\end{equation}
Notice that topology with linear structure always has lower complexity score. However, the linear structure lacks the ability to call agents in parallel. That means the next agent must wait until current agent finish its task instead of work in the same time. Considering this time cost (also called delay), we take graph depth $d$ into account. When minimizing the average cost, we can ignore the constant part and token length $m$. Then we obtain the expression of topology density before normalization.
\begin{equation}\label{eq:topo_density_tmp}
\mathcal{S} = |V| + 2\frac{|E|}{|V|} + d.
\end{equation}

The interaction cost is then analytically linked to three topological features:  
\begin{itemize}
\item \textbf{Number of Agents \( N = |V| \): }The total number of agents is a primary driver of base computational and memory overhead. Each agent typically encapsulates a large language model (LLM) or a policy network, thus the cost of inference, state maintenance, and context management scales at least linearly with \( N \). This represents the fixed cost of maintaining the system.
\item \textbf{Edge Density: }The average degree \( \bar{e} = \frac{|E|}{|V|} \) correlates with interaction overhead. Higher density implies more pairwise interactions per nodes, increasing synchronization and message-passing costs. 
\item \textbf{Graph Depth \( d \): }The number of nodes of the longest path between any two agents defines the worst-case coordination latency. Large depths necessitate multi-hop communications, amplifying delay and potential error propagation. 
\end{itemize}

The number of agents and edge density can be explicitly derived from the definition of the YAML field. However, the depth $d$ needs additional calculations. To cope with this problem, we extract the properties of manager-guided multi-agent interaction and conclude it as the following theorem.

\begin{theorem}
\label{thm-1}
    Given DAG $\mathcal{G}^{(k)}$ defined by manager-guided multi-agent interaction, $\mathcal{G}^{(k)}$ is a partite-graph with $b$ parts. Then we have $d^{(k)} = b$, where $d^{(k)}$ is the depth of $\mathcal{G}^{(k)}$.
\end{theorem}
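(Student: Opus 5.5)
The plan is to prove the two inequalities $d^{(k)} \le b$ and $d^{(k)} \ge b$ separately. Here depth means the number of nodes on a longest directed path, as in the paper's description of graph depth. Both directions work only with the layer partition $\{\mathcal{V}^{(k)}_1,\dots,\mathcal{V}^{(k)}_b\}$ of Definition~\ref{def:dag}. Throughout, I assume every layer is nonempty; an empty step can simply be deleted from the YAML plan without changing the graph.

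\textbf{Upper bound.} Take any directed path $v_1 \to v_2 \to \cdots \to v_L$ in $\mathcal{G}^{(k)}$ and let $\ell(v)$ denote the layer index of $v$. By \emph{Sequentiality}, every edge $(u,v)$ satisfies $\ell(u) < \ell(v)$. Hence $\ell(v_1) < \ell(v_2) < \cdots < \ell(v_L)$ is a strictly increasing sequence in $\{1,\dots,b\}$, so $L \le b$. Since this holds for every path, $d^{(k)} \le b$.

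\textbf{Lower bound.} This is the step I expect to be the main obstacle. \emph{Conciseness}, as literally stated, only guarantees that every non-final node has an out-edge to \emph{some} later layer. That alone does not force a path of $b$ nodes. For example, take three singleton layers with edges $1\to 3$ and $2\to 3$: both properties hold, yet the depth is $2$.

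So I would make explicit the reading of ``well-defined layer structure'' that the orchestrator actually induces: the step of each agent is the earliest step compatible with its \texttt{ref} list. Under that reading, every node $v \in \mathcal{V}^{(k)}_j$ with $j > 1$ has at least one in-neighbour in $\mathcal{V}^{(k)}_{j-1}$. Equivalently, $\ell(v) = 1 + \max_{u \in v[\text{ref}]} \ell(u)$, and first-step agents have empty \texttt{ref}, consistent with rule (1) of the schema check in Appendix~\ref{appendix: sft_quality}. I would state this as the precise form of Definition~\ref{def:dag} being used.

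With this tight-layering property, construct the path backwards:
\begin{enumerate}
\item Pick any $v_b \in \mathcal{V}^{(k)}_b$.
\item Given $v_j \in \mathcal{V}^{(k)}_j$ with $j>1$, choose an in-neighbour $v_{j-1} \in \mathcal{V}^{(k)}_{j-1}$, which exists by tightness.
\item Repeat until reaching layer $1$.
\end{enumerate}
This yields a directed path $v_1 \to v_2 \to \cdots \to v_b$ with $b$ nodes, so $d^{(k)} \ge b$.

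Conciseness then plays a complementary role: it guarantees that no node is a dead end before the last layer, so every longest path can be taken to end in $\mathcal{V}^{(k)}_b$.

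Combining the two bounds gives $d^{(k)} = b$. This justifies replacing $d$ by the number of steps $s$ in $S_{\text{depth}}$.
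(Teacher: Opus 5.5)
Your argument is the paper's: $d^{(k)}\le b$ by the same pigeonhole on strictly increasing layer indices, and $d^{(k)}\ge b$ by the same backward chain of predecessors from a node of $\mathcal{V}^{(k)}_b$ down to $\mathcal{V}^{(k)}_1$. The one real difference is in your favor: the paper just asserts that a node of $\mathcal{V}^{(k)}_j$ has a predecessor in $\mathcal{V}^{(k)}_{j-1}$ ``since edges go from lower to higher parts,'' which, as your three-layer example correctly shows, does not follow from Sequentiality and Conciseness, so your tight-layering condition is exactly the unstated assumption that proof relies on (though present it as an added hypothesis, since the YAML schema only requires \texttt{ref}s to point to earlier steps and does not enforce it).
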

\begin{proof}
     First, we prove that there exists a path with length $b$, equivalently, \textbf{there exists a path that sequentially visits each part $V_1, V_2, \ldots, V_b$.}
    
    By definition, $V_1$ contains only \textbf{sources} (no incoming edges from within $\mathcal{G}^{(k)}$), and $V_b$ contains only \textbf{sinks} (no outgoing edges within $\mathcal{G}^{(k)}$). Choose any sink $t \in V_b$. Since $t \in V_b$ and edges go from lower to higher parts, $t$ must have a predecessor $p_{b-1} \in V_{b-1}$ (if $b>1$). Similarly, $p_{b-1}$ must have a predecessor $p_{b-2} \in V_{b-2}$. Repeating this process yields a path backwards from the sink:
    \[p_1 \rightarrow p_2 \rightarrow \cdots \rightarrow p_{b-1} \rightarrow t,\]
    where $p_i \in V_i$ for $i = 1, 2, \ldots, b-1$. The forward path $P = p_1 \rightarrow p_2 \rightarrow \cdots \rightarrow p_{b-1} \rightarrow t$ visits $b$ different parts ($V_1, V_2, \ldots, V_b$) and contains exactly $b$ vertices.

    \textbf{Then we prove that $d \leq b$.}

    Assume that a path $P = v_1 \rightarrow v_2 \rightarrow \cdots \rightarrow v_m$ exists with $m > b$ vertices. Let $v_i \in V_{a_i}$. Since any edge $v_i \rightarrow v_{i+1}$ must satisfy $a_i < a_{i+1}$ (by the Definition~\ref{def:dag}), the sequence of part indices is strictly increasing:
    \[a_1 < a_2 < \cdots < a_m.\]
    This sequence has $m$ distinct integers. However, these integers must all lie in the set $\{1, 2, \ldots, b\}$, which contains only $b$ distinct integers. The assumption $m > b$ requires finding more than $b$ distinct integers in a set of size $b$, which is impossible. Therefore, no such path $P$ can exist. Consequently, any path has at most $b$ vertices, and the depth $d \leq b$.
\end{proof}
We must emphasize that in most cases, the agent calling steps satisfy $s = b$, which means $b$ can be directly calculated. However, in rare cases, inter-interactions may not happen between two layers, e.g. $\mathcal{V}^{(k)}_i$ and $\mathcal{V}^{(k)}_j$. In this situation, $\mathcal{V}^{(k)}_i \cup \mathcal{V}^{(k)}_j$ is an independent set, which leads to $b < s$ and additional response time. So we use $s$ as a measurement of the graph depth to recognize the two sequences with the same topology.

Now we have the basic expressions of topology density as Eq.~\ref{eq:topo_density}
\begin{equation}\label{eq:topo_density}
\mathcal{S} = |V| + 2\frac{|E|}{|V|} + s.
\end{equation}
\paragraph{Topology Density Normalization} 
With the difficulty level $l$, we have the maximum allowed number of nodes $N_{\max}(l)$. To normalize the density of different difficulties into the same distribution, we scale the formula to (0, 1). 

First, we have $\frac{|V|}{N_{\max}(l)} \leq 1$. After limiting the upper bound of $|V|$, we further constrain the limitation of $\frac{|E|}{|V|}$. Notice that the agent communication edges are categorized into three types, \emph{intra-round edges}, \emph{inter-round cross-agent edges} and \emph{inter-round self-edges}. Among them, we have \emph{intra-round edges} $|E_{intra}| \leq \frac{|V|(|V|-1)}{2}$ with the Definition~\ref{def:dag} for the intra-round edges. For the inter-round edges, \emph{inter-round self-edges} can be approximately equal to $|V|$ with the assumption (3), and we have \emph{inter-round cross-agent edges} $|E_{cross\_inter}|\leq |V|(|V|-1)$. Then for the edge density, 
\begin{equation}\label{eq:edge_density1}
\bar{e} \leq \frac{|E_{intra}|}{|V|} + \frac{|E_{self\_inter}|}{2|V|}+\frac{|E_{cross\_inter}|}{2|V|},
\end{equation}
with the simplified form $\bar{e} \leq |V|-0.5$. Then the normalization form is $\frac{|E|}{|V|(|V|-0.5)}$. When the topology degenerate as linear structure, the depth $d$ is equal to $|V|$ which is the upper bound. So we have $\frac{z}{|V|} \leq 1$.

When complexity gets higher, it requires the final expression of complexity score to decrease. So, we implement a monotonically decreasing activate function in the final expression of the complexity score $\mathcal{S}_{complexity}$ with exponential function $e^{-x}$ in Eq.~\ref{eq:graph_complexity}.
\subsection{Detailed Definitions of Multi-Agent Roles}
Inspired by the design of MapCoder, our agent pool consists of six distinct agent types, each dedicated to different functions in the code generation process. In each round of code generation, the Managing Agent performs reasoning and selects the necessary agents from this pool. The names and token representations of each agent type are outlined in Figure.\ref{img:method_main} middle.
\subsubsection{Retrieval Agents}
Following Search-R1\citep{jin2025search}, the following retrieval agents employ the E5 model as the unified retriever. E5 serves as the retrieval backbone and is invoked by retrieval agents to identify semantically relevant documents during inference. The retrieval agents can incorporate inputs from other agents as reference context to enhance retrieval accuracy.
To enable retrieval of semantically similar code solutions, we construct an offline retrieval agent. Following VoyageAI, we create a document for each elementary programming problem with a canonical solution (i.e., APPS, HumanEval, and MBPP) by concatenating the description of the natural language problem with its corresponding reference implementation.
advanced library usage.
\subsubsection{Planning Agent}
The Planning Agent takes as input the original problem along with the outputs of other agents selected by the managed agent in the previous step, and aims to generate a step-by-step coding plan for solving the original problem. In addition, the Planning Agent can iteratively refine its plan based on previous error messages and the last-round plan, aiming to produce a more effective solution strategy.
\subsubsection{Algorithmic Agent}
The algorithmic agent takes as input the code problem and the outputs of other agents, and produces a customized sequence of algorithmic solution steps tailored to the given problem.
\subsubsection{Coding Agent}
The Coding Agent generates an initial code solution by leveraging the problem description, the step-by-step coding plan produced by the Planning Agent, and reference materials—such as code snippets or tutorials—retrieved by the Retrieval Agent.
\subsubsection{Debugging Agent}
Starting from the second round, when the initial code generation encounters issues, the Debugging Agent can iteratively revise the code by leveraging previous error messages and interaction history. Alternatively, it can regenerate code based on the updated coding plan and newly retrieved reference materials. The specific strategy adopted is determined by the Planning decisions made by the Managing Agent.
\subsubsection{Testing Agent}
At the end of each iteration, we invoke the Testing Agent to evaluate the correctness of the generated code. It returns a binary pass/fail signal along with graded error diagnostics, which are used both for computing the reward function and as a termination criterion for the iterative process.
\section{Supplementary Definitions for RL}
\subsection{Definitions of Multi-Turn Trajectories and Returns in RL}
\label{appendix:Multi-Turn Trajectories}
We define the multi-turn trajectory as: 
\begin{equation}\label{eq:traj}
\tau = \{(o_k,\, z_k,\, r_k)\}_{k=0}^{K-1},
\end{equation}
where $o_k$ is the YAML token sequence encoding the interaction topology of turn $k$, $z_k$ denotes the corresponding multi-agent execution outcome produced by the environment, and $r_k$ is the immediate reward assigned based on the execution result. The reward is computed via a function \(r_\phi(\cdot)\) that evaluates the current interaction graph and the code validation outcome:
\begin{equation}\label{eq:reward}
r_k = r_\phi\!\big(\mathcal{G}^{(k)},\, z_k^{\text{code}}\big)
\end{equation}
where \(z_k^{\text{code}}\) is the result of executing input–output test cases in a sandboxed code-validation tool.
Different rewards or penalties are assigned depending on whether the code passes the tests or on the specific type of error encountered. In addition, the structural contribution is computed based on whether the topology density of $\mathcal{G}^{(k)}$ stays within a task-specific upper bound determined by the difficulty of the problem. 
The overall return of a trajectory is defined as the discounted sum of per-turn rewards:
\begin{equation}\label{eq:return}
R(\tau) = \sum_{k=0}^{K-1} \gamma^k \, r_k,
\end{equation}
where \(\gamma \in [0,1]\) is a discount factor that modulates the relative importance of earlier versus later rewards. This return serves as the training signal for optimizing the policy.
\subsection{Reinforcement Learning Objective for Generating Topologies with Adaptive Complexity}  
\label{appendix:rl_funs}
The GRPO objective function can be formally expressed as follows:
\begin{equation}\label{eq:grpo_obj_fit}
\resizebox{0.95\linewidth}{!}{$
\begin{aligned}
J_{\mathrm{GRPO}}(\theta)=&\ \frac{1}{G}\sum_{i=1}^{G}\frac{1}{L_i}
\sum_{k=0}^{K_i-1}\sum_{u=1}^{|o_{i,k}|}
\min\!\Bigg[
\frac{\pi_\theta\!\left(o_{i,k,u}\mid x,H_{i,k},o_{i,k,<u}\right)}
     {\pi_{\text{old}}\!\left(o_{i,k,u}\mid x,H_{i,k},o_{i,k,<u}\right)}\,\hat A_i,\ 
\\[-2pt]
&\hspace{4.2cm}
\operatorname{clip}\!\Bigg(
\frac{\pi_\theta\!\left(o_{i,k,u}\mid x,H_{i,k},o_{i,k,<u}\right)}
     {\pi_{\text{old}}\!\left(o_{i,k,u}\mid x,H_{i,k},o_{i,k,<u}\right)},
1-\varepsilon,1+\varepsilon\Bigg)\,\hat A_i
\Bigg] 
-\beta\,\mathbb D_{\mathrm{KL}}^{\text{(topo)}}.
\end{aligned}
$}
\end{equation}

Here, $L_i=\sum_{k=0}^{K_i-1}|o_{i,k}|$ denotes the total number of topology tokens in trajectory $i$, $\varepsilon$ controls the clipping range, and $\mathbb D_{\mathrm{KL}}^{\text{(topo)}}$ is the token-level KL regularizer computed \emph{only} over topology tokens (as in Eq.~\ref{eq:rl_objective}). 
\paragraph{Design of a Rule-Based Multi-Objective Reward Function}
The reward function directly influences the optimization process in RL. 
In this subsection, we elaborate on the definition of the immediate per-turn reward function 
\(r_\phi(\cdot)\) introduced in Eq.~\ref{eq:reward}. 

The general return $R(\tau)$ serves as the training signal to optimize the topology generation policy, which aims to produce interaction graphs with dynamic structural complexity adapted to the difficulty of the input problem, while maximizing the likelihood of generating code that passes all test cases. Our goal is to maximize expected return on trajectories sampled from the current policy, while regularizing against a reference policy using a token-level Kullback–Leibler (KL) divergence. Notably, the policy $\pi_\theta$ is responsible only for generating the topology token sequences $o_k$; all agent responses, code execution traces (contained in $z_k$) are treated as environment outputs and are excluded from the KL regularization term.

We define the following trajectory-level optimization objective:
\begin{equation}\label{eq:rl_objective}
\max_{\theta} \; \mathbb{E}_{x \sim \mathcal{D},\; \{o_k\} \sim \pi_\theta}\! \left[ R(\tau) \right] 
\; - \; \beta \, \mathbb{E}_{\{o_k\} \sim \pi_\theta}\! \left[ \frac{1}{L(\tau)} 
\sum_{k=0}^{K-1} \sum_{u=1}^{|o_k|} 
\log \frac{ \pi_\theta(o_{k,u} \mid x, H_k, o_{k,<u}) }{ \pi_{\mathrm{ref}}(o_{k,u} \mid x, H_k, o_{k,<u}) } \right]
\end{equation}
where $\tau = \{(o_k, z_k, r_k)\}_{k=0}^{K-1}$ is the trajectory induced by the topology sequences $\{o_k\}$ sampled from the policy $\pi_\theta$, with the corresponding interaction graphs, agent outputs, and rewards deterministically generated by the environment. The term $L(\tau) = \sum_{k=0}^{K-1} |o_k|$ denotes the total number of topology tokens in the trajectory, and $\beta$ is a weighting coefficient that balances reward maximization against policy divergence. Here, $x$ is a problem instance drawn from the dataset $\mathcal{D}$, and $o_{k,<u} = (o_{k,1}, \ldots, o_{k,u-1})$ denotes the prefix token sequence generated prior to position $u$ in round $k$.

\subsection{Reward Design and Sensitivity Analysis}

\subsubsection{Reward Design Principles}

Our reward design follows three core objectives:  
(1) ensuring syntactic validity of the YAML topology,  
(2) guaranteeing functional correctness of the generated solution, and (3) controlling communication cost by encouraging difficulty-aware sparsity in the agent topology. These objectives are realized through two components: $r_e$ for execution correctness (syntax and solution outcome), and $r_g$ for topology density. The separation enables targeted optimization for both correctness and structural efficiency.

\paragraph{YAML Format and Structural Validity.}
Invalid YAML structures receive a strong negative reward, as they cannot support valid multi-agent execution. Other YAML format penalties apply only to the topology structure itself and are independent of roles or tasks. Once the YAML structure is correct, the penalty becomes zero, enabling $r_e$ to focus solely on program execution correctness.


\paragraph{Difficulty-Aware Density Bounds.}
We additionally set topology density upper bounds of 4, 7, and 10 for tasks of different difficulty levels. These values are obtained through statistical analysis of thousands of SFT-generated samples, examining the distribution of topology densities required for successful solutions.

\end{document}